\newtheorem{lemma}{\hskip\parindent\bf{Lemma}}
\newtheorem{theorem}{\hskip\parindent\bf{Theorem}}
\begin{document}

\title{Edge and Central Cloud Computing: A Perfect Pairing for High Energy Efficiency and Low-latency}

\author{Xiaoyan Hu,~\IEEEmembership{Student Member,~IEEE,} Lifeng Wang,~\IEEEmembership{Member,~IEEE,}  Kai-Kit~Wong,~\IEEEmembership{Fellow,~IEEE,} 
Meixia Tao,~\IEEEmembership{Fellow,~IEEE,} } 
\maketitle
\vspace{-1.5cm}
\begin{abstract}
In this paper, we study the coexistence and synergy between edge and central cloud computing in a heterogeneous cellular network (HetNet), which contains a multi-antenna macro base station (MBS), multiple multi-antenna small base stations (SBSs) and multiple single-antenna user equipment (UEs). The SBSs are empowered by edge clouds offering limited computing services for UEs, whereas the MBS provides  high-performance central cloud computing services to UEs via a restricted multiple-input multiple-output (MIMO) backhaul to their associated SBSs. With processing latency constraints at  the central and  edge networks, we aim to minimize the system energy consumption used for task offloading and computation. The problem is formulated by jointly optimizing the cloud selection, the UEs' transmit powers, the SBSs' receive beamformers, and the SBSs' transmit covariance matrices, which is {a mixed-integer and non-convex optimization problem}. Based on methods such as decomposition approach and successive pseudoconvex approach, a tractable solution is proposed via an iterative algorithm.  The simulation results show that our proposed solution can achieve great performance gain over conventional schemes using edge or central cloud alone. Also, with large-scale antennas at the MBS, the massive MIMO backhaul can significantly reduce the complexity of the proposed algorithm and obtain even better performance.
\end{abstract}
\vspace{-0.2 cm}
\begin{IEEEkeywords}
Edge computing, central cloud computing, HetNets, backhaul, massive MIMO.
\end{IEEEkeywords}

\IEEEpeerreviewmaketitle
\newpage
\section{Introduction}\label{sec:Introduction}
\subsection{Motivations}\label{sec:PW}
{To deal with the computation-intensive tasks generated by a large variety of mobile applications, the concept of central cloud computing (CCC) first emerges,
which offloads these tasks to remote powerful computing centers, also known as central cloud.}
Mobile cloud computing (MCC) integrates CCC into mobile environment to facilitate mobile users to take full advantages of cloud resources \cite{S_A.U.R.Khan14ASurvey,J_H.T.Dinh13Asurvey}. 
Although CCC/MCC can provide high-performance computing services for UEs, it has one inherent disadvantage, i.e., {the central cloud is usually located} far away from the UEs. Hence, accessing the CCC/MCC services induces excessive transmission latency, which aggravates the backhaul congestion. Besides, it is easy to encounter the performance bottleneck  considering the finite backhaul capacity and exponentially growing mobile data, which has led to the emergence of edge (cloud) computing.

Edge (cloud) computing has recently been regarded as one of the key enablers to shape the future intelligent wireless networks. The rationale behind edge computing is that cloud computing can be carried out at the edge of wireless networks {which is close to UEs,} so as to facilitate computation offloading of UEs and prolong their battery lifetime~\cite{J_S.Barbarossa14Communicating,J_Y.Mao17Mobile}.
The standardization bodies and industry
associations such as  ETSI and 5GAA have identified various edge computing use cases  for 5G cellular networks, such as vehicle-to-everything (V2X) and massive machine-type communications (mMTC), etc.,~\cite{ETSI_2018_white_paper,5GAA-VISUAL}.
For practical deployment, several edge computing architectures have already been proposed, such as mobile edge
computing (MEC)~\cite{P_Mach_2017}, fog computing~\cite{M_chiang_2017,K_Dolui_2017}, and also cloudlets~\cite{cloudlets_2012}.


Edge computing is of great benefit to resource-limited UEs, e.g., the Internet-of-Things (IoT) devices, which avoids
the frequent delivery of massive computing tasks to the core networks with central cloud for computing, and thus reduces the
transmission latency and backhaul congestion \cite{mungchiang_2016_dec,X_Lyu_2017}.
However, the computing capabilities at the edge servers/clouds are also limited in general due to the cost and their constrained size. For UEs with highly computation-intensive  tasks, the edge computing servers/clouds may be incapable of providing them with satisfactory computing services. Under this situation, CCC/MCC has been shown to be an effective solution. Hence, in order to improve the quality of service (QoS) for dealing with a wide range of UEs' computation tasks, applying the architecture with the coexistence and cooperation between the edge and central clouds could be a promising option.

\subsection{Related Works}\label{sec:RW}
Extensive works on CCC/MCC have been conducted to explore {the potential} of central cloud. Several system architectures with various code offloading frameworks have been studied, e.g., MAUI \cite{C_E.Cuervo10MAUI} and ThinkAir \cite{C_S.Kosta12Thinkair}.  In \cite{J_Z.Xiao13Dynamic}, dynamic resource allocation using virtualization technology was studied to achieve overload avoidance and  green computing  by minimizing the number of physical machines.
Also, a computation offloading algorithm was proposed in \cite{J_S.Deng15Computation}  to deal with multiple services in workflow by leveraging MCC.

Recently, considerable attention has been paid to the design and analysis of edge computing in cellular networks,
e.g.,~\cite{Olga_Mun_2015,chenxu_2016,Changsheng_you_mar_2017,A_o_simeone,Cheng_wang_aug_2017}. The tradeoff between
energy consumption and latency in information transmission and computation was analyzed in \cite{Olga_Mun_2015}, where
an UE offloaded its application tasks to a small BS (SBS) for processing. In~\cite{chenxu_2016}, a multi-user
computation offloading problem was considered in a single-cell scenario and game-theoretical solutions were proposed in order to maximize the cell load and minimize the cost in terms of computational time and energy simultaneously. 
Later in~\cite{Changsheng_you_mar_2017}, time and frequency allocation problems for improving energy efficiency were studied by considering multi-user computation offloading in a single cell equipped with limited cloud capacity, where an offloading priority function was derived to accommodate users' priorities. 
The work of~\cite{A_o_simeone} examined a single-cloudlet scenario where multiple UEs were served with equal-time sharing, and a successive convex optimization approach was developed to minimize the network energy consumption under a computing latency constraint.
Recent works related to edge computing also focus on multi-service scenarios. 
For example, \cite{Cheng_wang_aug_2017} considered a single MEC server with storage capability and attempted to maximize the revenue of providing both the computing and caching services.
{Besides, the effects of implementing the edge computing in energy harvesting networks have been verified in \cite{X_Sun_2017,C_You_2016,Xiaoyan_hu_2018,Jiexu_WPT_MEC2018,J_S.Bi18Computation},
where the lifetime of UE's battery could be further prolonged..}

The complementary benefits between edge and central cloud  have driven {research} towards the coexistence and cooperation between edge and central clouds \cite{J_L.Lei13Challenges}. 
One such example was \cite{C_T.Zhao15ACooperative} where a delay-aware scheduling between local and Internet clouds was studied, and a priority-based cooperation policy was given to maximize the total successful offloading probability.
The placement and provisioning of virtualized network functions were explored in  \cite{C_F.BenJemaa16QoS}, where {a QoS}-aware optimization strategy was proposed over an edge-central carrier cloud infrastructure. Also, the work in \cite{C_M.Chen16Joint} considered that an edge server and a central cloud coexist to complete the UEs' computations cooperatively, where a wired connection was assumed between the edge and the central cloud.

\subsection{Our Contributions}\label{sec:Contributions}
Most of the existing computing works focused on either the edge or central cloud computing independently, and the edge computing works mainly concentrated on small-scale networks such as the single MEC server or cloudlet case \cite{chenxu_2016,Changsheng_you_mar_2017,A_o_simeone,Cheng_wang_aug_2017,X_Lyu_2017,C_You_2016,Xiaoyan_hu_2018,Jiexu_WPT_MEC2018,J_S.Bi18Computation}.
Even though edge computing has been regarded as a promising trend to deal with the ever growing mobile computing data,
it cannot entirely replace the present central cloud computing, due to the fact that edge computing is set to push limited processing and storage capabilities close to UEs but may be incapable of dealing with big data processing. {The latest white paper from ETSI has further illustrated that central cloud computing and edge computing are highly complementary and significant benefits can be attained when utilizing them both~\cite{ETSI_2018_white_paper}.}
{However, the existing works \cite{C_T.Zhao15ACooperative,C_F.BenJemaa16QoS,C_M.Chen16Joint} considering the co-existence of edge and central cloud computing  either focus on delay-aware priority scheduling, virtualized resource allocation, or offloading with wired backhaul. The issues related to offloading decision and resource allocation of hybrid edge/central cloud computing networks with wireless backhaul  have not been thoroughly studied, especially from the viewpoint of communication \cite{J_Y.Mao17Mobile}.}
{Therefore, this paper studies the deployment of heterogeneous edge and central clouds to leverage the easy access of edge clouds and the abundant computing resources at the central cloud, mainly from the viewpoint of communication by considering cloud selection and resource allocation.
To our best knowledge, this is the first work addressing the integrated edge and central cloud computing in heterogeneous cellular networks (HetNets) while considering the physical properties of wireless backhaul.}

Our main contributions are summarized as follows:
\begin{itemize}
  \item \textbf{Hybrid Edge/Central Cloud Computing Architecture:}
   We consider a hybrid edge and central cloud computing architecture in a two-tier HetNet, including one macro cell with a macro BS (MBS) and multiple small cells each with an SBS. The edge clouds with limited computing capabilities are co-located at or linked to the SBSs by error-free optical fibers while the central cloud with ultra-high computing capability is connected with the MBS through optical fibers as well. The UEs can offload their computation tasks directly to the SBSs to access the edge cloud computing services (edge computing mode) or further offload to the MBS through the restricted  multiple-input multiple-output (MIMO)/massive MIMO backhaul to utilize the central cloud computing services (central computing mode). Cooperation of edge and central clouds will improve the QoS and ensure the scalability and load balancing between the edge and central clouds.

 \item \textbf{Problem Formulation with Joint Optimization on the Cloud Selection, Access Transmit Powers, Receive Beamforming Vectors and Backhaul Transmit Covariance Matrices:}
     Our aim is to minimize the network's energy consumption for task offloading and computation under both the central and edge processing latency constraints through jointly optimizing the cloud selection, the UEs' transmit powers, the SBSs' receive beamforming vectors, and the SBSs' transmit covariance matrices.
     The central processing latency constraint requires the {backhaul} transmission latency being lower than the computing latency at the edge cloud; otherwise, the central cloud will not be selected.
     The edge processing latency \mbox{constraint} requires the corresponding latency not exceeding a targeted threshold to guarantee the quality of services provided by edge cloud.
     A mixed-integer and non-convex optimization problem is formulated accordingly, which is {NP-hard} in general. For the case of massive MIMO backhaul, we consider two low-complexity linear processing methods, namely maximal-ratio
     combining (MRC) and zero-forcing (ZF), and the corresponding optimization problems can be much simplified.

\item \textbf{Algorithm Design:} An iterative algorithm is \mbox{developed} to solve the  combinatorial mixed-integer and non-convex optimization problem {corresponding to} the case with traditional MIMO backhaul. In particular, we show that in each iteration, the UEs' transmit powers and the SBSs' receive beamforming vectors can be optimized in closed-form, and the SBSs' transmit covariance matrix solution is obtained by leveraging a successive pseudoconvex optimization approach. In addition, the massive MIMO backhaul solutions can be easily obtained thanks to the unique features of massive MIMO transmission, which significantly reduce the complexity of the algorithm. {The practicality of the proposed algorithm lies in that it can properly address the issues of cloud decision and resource allocation for a HetNet architecture with hybrid edge/central cloud computing resources while considering the physical properties of wireless backhaul.}

\item {\textbf{Design Insights:}} Simulation results are presented to demonstrate the efficiency of the proposed algorithm and shed light on the effects of key parameters such as the offloaded task size, edge processing latency threshold, and edge cloud's CPU frequency.
    It is confirmed that the solution of the integrated edge and central cloud computing scheme proposed in this work can achieve better performance than the schemes with edge (cloud) computing alone or central cloud computing alone, and outperforms all the other benchmark solutions. In addition, low-complexity massive MIMO solution with ZF receiver could  always  outperform the solution with traditional  MIMO backhaul,  while the solution with  MRC receiver could achieve similar or better performance than the traditional  MIMO one in certain scenarios.
\end{itemize}

The rest of this paper is organized as follows. In Section~\ref{sec:system}, the considered system model is described and
the corresponding optimization problem is formulated. The proposed algorithm under traditional MIMO backhaul is presented in
Section~\ref{algorithm_design}, and massive MIMO backhaul solution is given in Section \ref{sec:Special_MEC_ZF}.
Section~\ref{sec:simulation} provides the simulation results. Finally, we have some concluding remarks in Section~\ref{sec:conclusion}.

{\em Notations}---In this paper, the upper and lower
case bold symbols denote matrices and vectors, respectively. The notations $(\cdot)^H$ and $(\cdot)^\dagger$ are conjugate transpose and conjugate operators for vectors or matrices, respectively.  $\left[x\right]^+$ is used as $\max\left\{x,0\right\}$. In addition, $\det\left(\mathbf{A}\right)$ denotes
the determinant of $\mathbf{A}$, and $\mathrm{tr}\left\{\mathbf{A}\right\}$ is the trace of $\mathbf{A}$. Also,
${\rm eig}\left\{\mathbf{A}\right\}$ denotes the set of all the eigenvalues for $\mathbf{A}$, and ${\rm eigvec}\left\{\cdot\right\}$ gives the eigenvector for a given eigenvalue of $\mathbf{A}$.
$\langle\mathbf{A}_1,\mathbf{A}_2\rangle\triangleq\mathfrak{R}\{\mathrm{tr}(\mathbf{A}_1^H\mathbf{A}_2)\}$, where
$\mathfrak{R}\{\cdot\}$ is the real-value operator, and $\nabla_{\mathbf{A}}f(\mathbf{A})$ denotes the Jacobian matrix of $f(\mathbf{A})$ with respect to (w.r.t.) $\mathbf{A}$.

\section{System Model and Problem Formulation}\label{sec:system}
\begin{figure}[t!]
\centering
\includegraphics[width=3 in]{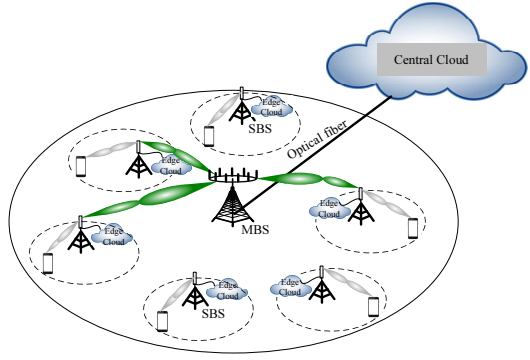}
\caption{An illustration of two-tier HetNets equipped with edge clouds  associated with  the SBSs and central cloud connected by the MBS via optical fiber, where the MBS provides central cloud computing services for UEs through restricted MIMO/massive MIMO backhaul for addressing more complicated computing tasks which cannot be handled by the SBSs' edge cloud  due to the limited computing capability.}
\label{cloud_edge}
\end{figure}
As shown in Fig. \ref{cloud_edge}, we consider a two-tier HetNet, where  an $M$-antenna MBS provides wireless MIMO backhaul  and is fiber-optic connected to the central cloud with super computing capability, and $N$ SBSs with edge clouds can provide limited computing capabilities.\footnote{The central cloud can be regarded as the computing part of the cloud radio access network (Cloud RAN) \cite{J_D.Pompili15Dynamic}. The edge cloud can be an independent edge computing server co-located at the corresponding SBS or a certain part of computing capability  allocated to the SBS from a fiber-optic connected  edge computing center \cite{J_Y.Mao17Mobile}.}
In each small cell, an SBS equipped with $L$ antennas serves a single-antenna UE\footnote{The case of serving multiple UEs in each small cell can be effectively dealt with by using the  existing orthogonal multiple access techniques for radio resource allocation. { In addition, our work can be viewed as the equal computing resource sharing case for multiple active UEs in a small cell, or dedicated computing resource policy case for different types of edge computing service, i.e., each service will be granted one dedicated computing resource.}}. Note that existing user association schemes~\cite{dantong-survey} can be adopted to determine which user is connected to an SBS. Let $\mathcal{N}=\{1,\dots,N\}$ denote the set of the SBSs as well as the UEs.
Each UE $n\in\mathcal{N}$ has an atomic highly
integrated computation-intensive task, characterized by a positive tuple $[I_n, O_n, K_n]$, which cannot be partitioned for parallel execution. Here $I_n$ is the size (in bits) of the computation task-input data (e.g., the program codes and the input parameters) which cannot be divided and has to be offloaded as  a whole for computation, $O_n$ denotes the size of the task-output data corresponding to the results of the $I_n$ input data,  and $K_n$ is the amount of required computing resource for computing 1-bit of the input data (i.e., the number of CPU cycles required).\footnote{The parameters in the task tuple of  $[I_n, O_n, K_n]$ can be obtained through  task profilers by applying the methods (e.g., call graph analysis) \cite{J_Y.Mao17Mobile,C_E.Cuervo10MAUI,C_A.P.Miettinen10Energy,C_S.Melendez17Computation,J_L.Yang13AFramework}. It is assumed that the size of computing outputs, i.e., $O_n$ (usually a few command bits) is much smaller than $I_n$ (usually measured by Mbit) in practice, and thus the downlink overhead such as time and energy consumption for delivering the output data back  to the UEs is negligible and can be ignored.}   Let $B^\mathrm{a}$ and $B^\mathrm{b}$ denote the bandwidths allocated to UEs' access links to their serving SBSs and SBSs' backhaul links to the MBS, respectively. A coordination and monitoring protocol between SBSs and MBS, as the one used in \cite{J_T.Han16Small,J_D.Valocchi17SigMA}, is needed.

Assuming that the UEs are endowed with very limited computing resources, they tend to choose computation offloading to complete their computation tasks remotely, so as to save their own energy and resources. Since the computation tasks offloaded by the UEs could be executed either at the edge cloud or central cloud, the cloud selection needs
to be appropriately determined before evaluating the computation latency and energy consumption. Let the binary indicator
$c_n$ denote the computing decision, where $c_n=1$ indicates edge computing, and $c_n=0$ indicates central cloud computing being selected for each UE $n \in \mathcal{N}$.
In the sequel, we will study the latency and energy consumption of the network, and then formulate the optimization problem for
minimizing the network's total energy consumption for task offloading and computation under the central  and edge processing latency constraints.

\subsection{Transmission and Computing Latency}\label{sec:CTM}
\subsubsection{Access Transmission Latency} The uplink transmission rate of UE $n$ for offloading the $I_n$-bit computation tasks to its serving SBS is expressed as
\begin{align}\label{eq:R1}
R_n^{\mathrm{a}}(\mathbf{p}^{\mathrm{u}},\mathbf{w}_n)=B^\mathrm{a}\log_2
\left(1+\gamma_n^{\mathrm{a}}(\mathbf{p}^{\mathrm{u}},\mathbf{w}_n)\right), \ n\in\mathcal{N}
\end{align}
with the signal-to-interference-plus-noise ratio (SINR)
\begin{equation}\label{eq:SINR1}
\gamma_n^{\mathrm{a}}(\mathbf{p}^{\mathrm{u}},\mathbf{w}_n)=\frac{p_n^{\mathrm{u}}|\mathbf{w}_n^H{\mathbf{h}_{n,n}^{\rm a}}|^2}
{\sum_{i=1,i\neq n}^{N} p_i^{\mathrm{u}}|\mathbf{w}_n^H {\mathbf{h}_{i,n}^{\rm a}}|^2+|\mathbf{w}_n^H\mathbf{n}_n|^2},
\end{equation}
where $\mathbf{w}_n$ is the receive beamforming vector of the $n$-th SBS, ${\mathbf{h}_{i,n}^{\rm a}}\in\mathbb{C}^{L\times1}$ is the access channel vector
between UE $i$ and SBS $n$, $\mathbf{n}_n$ is a vector of the additive white Gaussian noise with zero mean and variance $\sigma_n^2$, and
$\mathbf{p}^{\mathrm{u}}\triangleq[p_1^{\mathrm{u}},\dots,p_N^{\mathrm{u}}]^T\in\mathbb{R}^{N\times1}$ denotes the transmit power vector
of the UEs. Therefore,  the uplink access transmission latency for UE $n$'s task offloading can be calculated as
\begin{align}\label{time_access}
T_n^\mathrm{a}(\mathbf{p}^{\mathrm{u}},\mathbf{w}_n)=\frac{I_n}{R_n^{\mathrm{a}}(\mathbf{p}^{\mathrm{u}},\mathbf{w}_n)}, \ n\in\mathcal{N}.
\end{align}

\subsubsection{Edge Computing Latency ($c_n=1$)}\label{edge_cloud_latency}
 Let $f_n$  denote the CPU clock frequency of the $n$-th edge cloud associated with  SBS $n$, and thus the corresponding edge computation latency for dealing with the $I_n$-bits input data can be described as 
\begin{align}\label{time_edge}
T_n^{\mathrm{edge}}=\frac{ I_nK_n}{f_n}, \ n\in\mathcal{N},
\end{align}
which indicates that the value of edge computing latency heavily depends on the offloaded task size, the unit computing resource required  and edge cloud's CPU clock frequency.
\subsubsection{Central Cloud Processing Latency/Backhaul Transmission Latency ($c_n=0$)}\label{cloud_time-process}
The central cloud processing latency results from backhaul transmission
and task execution at the central cloud. Due to the central cloud's super computing capability, its computing time is much lower
than edge computing, thus we assume that the time for central cloud computing is negligible. Hence, the central cloud processing latency, i.e.,
the backhaul transmission latency for the $n$-th UE can be  calculated as\footnote{{In our considered scenario, the accessing latency of MBS to the central cloud through optical fiber should be negligible especially compared with the wireless backhaul transmission latency. For the extreme case that the optical fiber transmission latency is not negligible, the central cloud processing latency can be re-expressed as $T_n^{\mathrm{central}}(\mathbf{Q})=\frac{I_n}{R_n^{\mathrm{b}}(\mathbf{Q})}+T{_{\mathrm{of}}^\mathrm{central}}$, where $T{_{\mathrm{of}}^\mathrm{central}}$ is a maximum threshold of optical fiber transmission latency. Even though, the proposed algorithms are still effective.}}
\begin{align}\label{time_backhaul}
T_n^{\mathrm{central}}(\mathbf{Q})=\frac{I_n}{R_n^{\mathrm{b}}(\mathbf{Q})}, \ n\in\mathcal{N},
\end{align}
where $R_n^{\mathrm{b}}(\mathbf{Q})$ is the backhaul transmission rate given by
\begin{align}\label{backhaul_rate}
\hspace{-0.3 cm} R_n^{\mathrm{b}}(\mathbf{Q})=B^\mathrm{b} \log_2\det\left(\mathbf{I}+\Psi(\mathbf{Q}_{-n})^{-1}
\mathbf{H}_n^\mathrm{b} \mathbf{Q}_n \left(\mathbf{H}_n^\mathrm{b}\right)^H\right),
\end{align}
with the noise-plus-interference covariance matrix 
$\Psi(\mathbf{Q}_{-n})=\sigma^2\mathbf{I}+\sum_{i=1,i\neq n}^N\mathbf{H}_i^\mathrm{b} \mathbf{Q}_i \left(\mathbf{H}_i^\mathrm{b}\right)^H$.
In \eqref{backhaul_rate}, $\mathbf{Q}_n$ is the transmit covariance matrix of SBS $n$, $\mathbf{Q}=\{\mathbf{Q}_n\}_{n=1}^N$ and
$\mathbf{Q}_{-n}=\{\mathbf{Q}_i\}_{i=1,i\neq n}^N$ are respectively the compact transmit covariance matrices and the compact transmit covariance matrices except $\mathbf{Q}_n$, and $\mathbf{H}_n^\mathrm{b}\in\mathbb{C}^{M\times L}$ is the backhaul channel matrix
from SBS $n$ to the MBS.  Note that if the task of UE $n\in\mathcal{N}$ is executed by the edge cloud of SBS $n$, i.e. $c_n$ = 1, the transmit covariance matrix at SBS $n$ shall be $\mathbf{Q}_n = \mathbf{0}$.


\subsection{Energy Consumption}
Energy consumption results from task offloading energy and task execution/computation energy.  Based on Section~\ref{sec:CTM}, the amount of energy
consumption for UE $n\in \mathcal{N}$ to offload its computing tasks to its serving SBS can be calculated as  
\begin{align}\label{energy_consumption_access}
E_n^\mathrm{a}=p_n^{\mathrm{u}} T_n^\mathrm{a}(\mathbf{p}^{\mathrm{u}},\mathbf{w}_n), \ n\in \mathcal{N}.
\end{align}
If the UE $n$'s task is executed by the edge cloud associated with the  SBS, the computation energy consumption at the corresponding edge server is given by 
\begin{align}\label{energy_consumption_edge}
E_n^\mathrm{edge}= \varrho_n  I_n  K_n f_n^2, \ n\in \mathcal{N},
\end{align}
where $\varrho_n$ is the effective switched capacitance of the edge cloud $n$. Else, if the task is executed by the central cloud, we then have the central processing energy consumption, including the backhaul transmission and the computation energy consumption, which is expressed as
\begin{align}\label{energy_consumption_cloud}
E_n^\mathrm{central}=\mathrm{tr}\left(\mathbf{Q}_n\right) T_n^{\mathrm{central}}(\mathbf{Q})+\zeta_n E_n^\mathrm{edge}, \ n\in \mathcal{N},
\end{align}
where $\zeta_n$ is the ratio of the central cloud's computation energy consumption to that of the  edge cloud $n$ for computing the same UE  $n$'s task.\footnote{$\zeta_n$ can be determined by $\varrho_n$, $f_n$, and  the effective switched capacitance and the CPU frequency of the central cloud used for computing UE $n$'s task. Different values of $\{\zeta_n, n\in \mathcal{N}\}$ represent different relationships between the computing energy consumption at  central cloud and edge clouds, and may have different effects on edge/central cloud selection and system performance.}
Thus, the total energy consumption for task offloading and computation can be calculated as\footnote{Here, the static energy consumption of UEs and
SBSs consumed by the circuit or cooling  is ignored since it has negligible effect on our design.}
\begin{align}\label{total_energy_consumption}
E_\mathrm{total}=& \sum\limits_{n{\rm{ = }}1}^N \left(E_n^\mathrm{a}+c_n E_n^\mathrm{edge}+\left(1-c_n\right)E_n^\mathrm{central}\right).
\end{align}

\subsection{Problem Formulation} \label{Problem_Formulation}
Our aim is to minimize network's total energy consumption used for task offloading and computation under central/backhaul and edge processing latency constraints through jointly optimizing  UEs' cloud selection decisions in $\mathbf{c}=\{c_n\}_{n=1}^N$, UEs' transmit power vector $\mathbf{p}^{\mathrm{u}}$, SBSs' receive beamformers in $\mathbf{w}=\{\mathbf{w}_n\}_{n=1}^N$ and the SBSs' transmit covariance matrices in $\mathbf{Q}$. To this end, the problem is formulated as
\begin{align}\label{P1}
&\mathop {\min }\limits_{{\mathbf{c}, \mathbf{p}^{\mathrm{u}},\mathbf{w}, \mathbf{Q} }} ~E_\mathrm{total}\qquad\qquad\qquad\qquad\qquad\qquad\qquad\qquad\\
&~~~~\mathrm{s.t.} ~~~  \mathrm{C1}:  c_n \in \left\{0,1\right\},~~\forall n \in \mathcal{N},    \nonumber\\
&~~~\qquad ~~\mathrm{C2}: \left(1-c_n\right) T_n^{\mathrm{central}}(\mathbf{Q}) \leq  \alpha T_n^{\mathrm{edge}},~~\forall n \in \mathcal{N},\nonumber\\
&~~~\qquad  ~~\mathrm{C3}: T_n^\mathrm{a}(\mathbf{p}^{\mathrm{u}},\mathbf{w}_n)+c_n T_n^{\mathrm{edge}} \leq T_\mathrm{th},~~\forall n \in \mathcal{N},  \nonumber\\
&~~~\qquad ~~\mathrm{C4}: 0\leq p_n^{\mathrm{u}} \leq P_{\max}^{\rm u},~~\forall n \in \mathcal{N},
 \nonumber\\
&~~~\qquad ~~\mathrm{C5}:  \mathbf{Q}_n\succeq\mathbf{0},~~\forall n \in \mathcal{N}.
 \nonumber
\end{align}
In problem \eqref{P1}, $\mathrm{C2}$ represents the central/backhaul processing  latency constraint, indicating that the central cloud is selected, i.e., the backhaul is allowed to be used for task offloading, only when the setting parameters can make sure that the central/backhaul  processing latency is lower than certain percentage, e.g., $\alpha$, of edge computing latency. Considering the scarce  backhaul resources, this constraint is reasonable in practice and {of great benefit} to guarantee the high-speed backhaul transmission, avoid the abuse of backhaul and alleviate the backhaul congestion.
Here, $0< \alpha<1$ is a predefined fraction for a specified scenario depending on the central cloud and backhaul restriction.
For the special case of $\alpha=0$, central cloud becomes unavailable as indicated in $\mathrm{C2}$ and $c_n=1$ for $n\in\mathcal{N}$,  then problem \eqref{P1} reduces to resource allocation problem in traditional MEC networks, which has been studied from different perspectives in the literatures such as  \cite{chenxu_2016,Changsheng_you_mar_2017,A_o_simeone,Cheng_wang_aug_2017,X_Lyu_2017,X_Sun_2017,C_You_2016,Xiaoyan_hu_2018,Jiexu_WPT_MEC2018,J_S.Bi18Computation}.
$\mathrm{C3}$  is the latency constraint for edge processing, such that the sum of the access transmission latency and the edge computing latency  should not exceed a given threshold $T_\mathrm{th}$. 
Note that $T_n^{\mathrm{edge}}$ expressed in \eqref{time_edge} increases with the task size $I_n$,  and thus if edge cloud cannot meet its latency constraint in $\mathrm{C3}$ when encounters large tasks, e.g., $T_n^{\mathrm{edge}}>T_\mathrm{th}$, central cloud will be the only option to be utilized, which further indicates the complementary relationship between edge and central cloud computing \cite{ETSI_2018_white_paper}. $\mathrm{C4}$ and $\mathrm{C5}$ guarantee that transmit power values are non-negative.

{In our considered scenario, we assume that  UEs' tasks have already been synchronized. In fact, our work can be easily extended into the cases considering the  latency of synchronizing UEs' tasks. For the case with deterministic task arrival model \cite{J_Y.Mao17Mobile}, the edge processing latency constraints $\mathrm{C3}$  should be changed into  $T{_n^{\mathrm{syn}}}+T_n^\mathrm{a}(\mathbf{p}^{\mathrm{u}},\mathbf{w}_n)+c_n T_n^{\mathrm{edge}} \leq T_\mathrm{th},n\in\mathcal{N}$, where $T{_n^{\mathrm{syn}}}$ is the synchronization latency of UE $n$. For the case with random task arrival model \cite{J_Y.Mao17Mobile}, we can introduce a maximum synchronization latency threshold, denoted as $T_{\mathrm{syn}}$. Then constraints $\mathrm{C3}$ can be changed into  $T_n^\mathrm{a}(\mathbf{p}^{\mathrm{u}},\mathbf{w}_n)+c_n T_n^{\mathrm{edge}} \leq T_\mathrm{th}-T_{\mathrm{syn}}, n\in\mathcal{N}$. In this way, we can also leverage the algorithms proposed in section \ref{algorithm_design} to solve the corresponding formulated problems.}

\section{Algorithm Design}\label{algorithm_design}
The considered problem \eqref{P1} is a mixed-integer and non-convex optimization problem because of the integer cloud selection indicator $\mathbf{c}$, and the non-convex objective function and constraints $\mathrm{C2}$, $\mathrm{C3}$, which is {NP-hard} in general and its optimal solution is difficult to achieve.  To be tractable, we first need to determine whether edge or central cloud computing will be employed, and  then we can optimize the transmit powers, receive beamformers and covariance matrices. Hence, a tractable  decomposition approach can be developed to solve  \eqref{P1} in an iterative manner considering the fact that $\mathbf{c}$ and $\left\{\mathbf{p}^{\mathrm{u}},\mathbf{w}, \mathbf{Q}\right\}$ are coupled in the objective function and constraints $\mathrm{C2}$, $\mathrm{C3}$  of problem \eqref{P1}.
\subsection{Edge or Central Cloud Computing}\label{cloud_selection}
As mentioned in section \ref{Problem_Formulation}, when the $n$-th edge cloud's computing time  $T_n^{\mathrm{edge}}$ is greater than the maximum allowable time $T_\mathrm{th}$, the use of edge cloud is infeasible and central cloud computing has to be utilized, i.e., $c_n=0$. Next, we optimize the cloud selection indicator  $\mathbf{c}$  for the case of $T_n^{\mathrm{edge}}<T_\mathrm{th}$.
To properly deal with the {integer optimization} caused by  $c_n$, we first relax $c_n\in \left\{0,1\right\}$ as  $\widehat{c}_n \in \left[0,1\right]$, and denote $\mathbf{\widehat{c}}=\{\widehat{c}_n\}_{n=1}^N$ as the set of the relaxed cloud selection variable $\widehat{c}_n$. Then problem \eqref{P1} with given feasible $\left\{\mathbf{p}^{\mathrm{u}},\mathbf{w}, \mathbf{Q}\right\}$ can be decomposed into the following relaxed version
\begin{align}\label{P1_relax}
&\mathop {\min }\limits_{ \mathbf{\widehat{c}} }~\sum\limits_{n{\rm{ = }}1}^N \left(\widehat{c}_n E_n^\mathrm{edge}+\left(1-\widehat{c}_n\right)E_n^\mathrm{central}\right)\\
&~\mathrm{s.t.} ~~~  \mathrm{\widehat{C}1}:  \widehat{c}_n \in \left[0,1\right],~~\forall n \in \mathcal{N}, \nonumber\\
&~~\qquad \mathrm{\widehat{C}2}: \left(1-\widehat{c}_n\right) T_n^{\mathrm{central}}(\mathbf{Q}) \leq  \alpha T_n^{\mathrm{edge}},~~\forall n \in \mathcal{N},\nonumber\\
&~~\qquad  \mathrm{\widehat{C}3}: T_n^\mathrm{a}(\mathbf{p}^{\mathrm{u}},\mathbf{w}_n)+\widehat{c}_n T_n^{\mathrm{edge}} \leq T_\mathrm{th},~~\forall n \in \mathcal{N}.  \nonumber
\end{align}
Problem \eqref{P1_relax} is a one-dimensional linear programming, and its solution can be given in the following two cases:
\begin{itemize}
\item Case 1: Without loss of generality, if the energy consumption of edge computing is lower than that of central processing for UE $n$'s task, i.e, $E_n^\mathrm{edge} \leq E_n^\mathrm{central}$, the objective function in \eqref{P1_relax} is a decreasing function of $\widehat{c}_n$. Therefore, the optimal $\widehat{c}_n^*$ is the maximum value that satisfies  $\mathrm{\widehat{C}1-\widehat{C}3}$, i.e.,
\begin{equation}
\widehat{c}_n^*=\left[\min\left\{\frac{T_\mathrm{th}-T_n^\mathrm{a}(\mathbf{p}^{\mathrm{u}},\mathbf{w}_n)}{T_n^{\mathrm{edge}}},1\right\}\right]^+.
\end{equation}

\item Case 2: if $E_n^\mathrm{edge} > E_n^\mathrm{central}$, the objective function in \eqref{P1_relax} is an increasing function of $\widehat{c}_n$, and the optimal $\widehat{c}_n^*$ is the minimum value that satisfies $\mathrm{\widehat{C}1-\widehat{C}3}$, i.e.,
\begin{equation}
\widehat{c}_n^*=\left[1-\frac{\alpha T_n^{\mathrm{edge}}}{T_n^{\mathrm{central}}(\mathbf{Q}) }\right]^+.
\end{equation}
\end{itemize}
It is seen that the relaxed edge/central cloud computing decision $\mathbf{\widehat{c}}^*$ is reliant on the optimal $\left\{\mathbf{p}^{\mathrm{u}},\mathbf{w}, \mathbf{Q}\right\}$ of problem \eqref{P1}. In the following two subsections, we will focus on obtaining the optimal $\{\mathbf{p}^{{\mathrm{u}}*},\mathbf{w}^*\}$ and $\mathbf{Q}^*$, respectively, based on a given cloud selection decision $\mathbf{\widehat{c}}$. 

\subsection{UEs' Transmit Powers and SBSs' Receive Beamformers}
For fixed cloud selection decision $\mathbf{\widehat{c}}$, the optimal $\left\{\mathbf{p}^{\mathrm{u}*},\mathbf{w}^*\right\}$ can be obtained by solving the subproblem of \eqref{P1} as follows:
\begin{align}\label{Access_power_beam}
&\mathop {\min }\limits_{\mathbf{p}^{\mathrm{u}},\mathbf{w}}~\sum\limits_{n{\rm{ = }}1}^N {p_n^{\mathrm{u}} T_n^\mathrm{a}}(\mathbf{p}^{\mathrm{u}},\mathbf{w}_n) \\
&~\mathrm{s.t.} ~~~\mathrm{\widehat{C}3},~~~\mathrm{C4},    \nonumber
\end{align} 
where $\widehat{\mathrm{C}}3$ and $\mathrm{C}4$ are the corresponding constraints expressed in problem \eqref{P1_relax} and \eqref{P1}, respectively.
The subproblem \eqref{Access_power_beam} is non-convex (over $\mathbf{p}^{\mathrm{u}}$) and its objective function is
the weighted sum-of-ratios, which is challenging to solve. We first examine the interplay between UEs' transmit power vector
 $\mathbf{p}^{\mathrm{u}}$ and  SBSs' receive beamformers in $\mathbf{w}$.
\begin{lemma}\label{lemma1}
For fixed $\mathbf{p}^{\mathrm{u}}$, the optimal $\mathbf{w}_n^*$ of problem \eqref{Access_power_beam} is given by
\begin{align}\label{w_beam}
\mathbf{w}_n^*={\rm eigvec}\left\{\max\left\{{\rm eig}\{\left(\mathbf{\Omega}_{-n}\right)^{-1} \mathbf{\Omega}_n \}\right\}\right\},
\end{align}
where $\mathbf{\Omega}_{-n}=\sigma_n^2\mathbf{I}_L+\sum_{i=1, i\neq n}^{N}p_i^\mathrm{u}{\mathbf{h}_{i,n}^{\rm a}}
({\mathbf{h}_{i,n}^{\rm a}})^H$ and $\mathbf{\Omega}_n =p_n^{\mathrm{u}} {\mathbf{h}_{n,n}^{\rm a}}({\mathbf{h}_{n,n}^{\rm a}})^H$.
\end{lemma}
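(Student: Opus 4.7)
The plan is to observe that, with $\mathbf{p}^{\mathrm{u}}$ held fixed, the beamformer variables $\mathbf{w}_n$ decouple across $n$: each $\mathbf{w}_n$ appears only in the $n$-th summand $p_n^{\mathrm{u}} T_n^{\mathrm{a}}(\mathbf{p}^{\mathrm{u}},\mathbf{w}_n)$ of the objective and only in the $n$-th edge-latency constraint $\widehat{\mathrm{C}}3$. Since $p_n^{\mathrm{u}}$ is fixed and non-negative, minimizing $p_n^{\mathrm{u}} T_n^{\mathrm{a}} = p_n^{\mathrm{u}} I_n/R_n^{\mathrm{a}}$ is equivalent to maximizing $R_n^{\mathrm{a}}(\mathbf{p}^{\mathrm{u}},\mathbf{w}_n)$, and because $\log_2(1+\cdot)$ is strictly increasing, it suffices to maximize the SINR $\gamma_n^{\mathrm{a}}(\mathbf{p}^{\mathrm{u}},\mathbf{w}_n)$ over $\mathbf{w}_n$. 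The latency constraint $\widehat{\mathrm{C}}3$ is also monotonically tightened by decreasing $T_n^{\mathrm{a}}$, so the SINR-maximizer is automatically the most feasible choice; if $\widehat{\mathrm{C}}3$ is infeasible at this maximizer, it is infeasible for every $\mathbf{w}_n$, and otherwise the maximizer is optimal.

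The next step is to recast $\gamma_n^{\mathrm{a}}$ as a generalized Rayleigh quotient in $\mathbf{w}_n$. Writing $|\mathbf{w}_n^H \mathbf{h}_{i,n}^{\mathrm{a}}|^2 = \mathbf{w}_n^H \mathbf{h}_{i,n}^{\mathrm{a}}(\mathbf{h}_{i,n}^{\mathrm{a}})^H \mathbf{w}_n$ and taking the noise term in its quadratic-form representation $\mathbf{w}_n^H(\sigma_n^2 \mathbf{I}_L)\mathbf{w}_n$, we obtain
\begin{equation}
\gamma_n^{\mathrm{a}}(\mathbf{p}^{\mathrm{u}},\mathbf{w}_n) \;=\; \frac{\mathbf{w}_n^H \mathbf{\Omega}_n \mathbf{w}_n}{\mathbf{w}_n^H \mathbf{\Omega}_{-n} \mathbf{w}_n},
\end{equation}
with $\mathbf{\Omega}_n = p_n^{\mathrm{u}} \mathbf{h}_{n,n}^{\mathrm{a}}(\mathbf{h}_{n,n}^{\mathrm{a}})^H \succeq \mathbf{0}$ and $\mathbf{\Omega}_{-n} = \sigma_n^2 \mathbf{I}_L + \sum_{i\neq n} p_i^{\mathrm{u}} \mathbf{h}_{i,n}^{\mathrm{a}}(\mathbf{h}_{i,n}^{\mathrm{a}})^H \succ \mathbf{0}$ (positive definite thanks to the $\sigma_n^2 \mathbf{I}_L$ term). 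The quotient is invariant to any nonzero complex rescaling of $\mathbf{w}_n$, which is consistent with the fact that $\mathbf{w}_n$ is a receive filter and does not enter any power constraint.

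I then invoke the standard generalized Rayleigh-quotient result: for Hermitian $\mathbf{\Omega}_n$ and positive-definite $\mathbf{\Omega}_{-n}$, the substitution $\mathbf{u} = \mathbf{\Omega}_{-n}^{1/2}\mathbf{w}_n$ turns the quotient into an ordinary Rayleigh quotient $\mathbf{u}^H \mathbf{\Omega}_{-n}^{-1/2}\mathbf{\Omega}_n \mathbf{\Omega}_{-n}^{-1/2} \mathbf{u}/\mathbf{u}^H \mathbf{u}$, whose maximum equals the largest eigenvalue of $\mathbf{\Omega}_{-n}^{-1/2}\mathbf{\Omega}_n \mathbf{\Omega}_{-n}^{-1/2}$ and is attained at its top eigenvector. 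Because similar matrices share spectra and $\mathbf{\Omega}_{-n}^{-1}\mathbf{\Omega}_n$ is similar to $\mathbf{\Omega}_{-n}^{-1/2}\mathbf{\Omega}_n \mathbf{\Omega}_{-n}^{-1/2}$, the same maximum is given by the largest eigenvalue of $\mathbf{\Omega}_{-n}^{-1}\mathbf{\Omega}_n$ and is achieved by the corresponding eigenvector, back-transformed via $\mathbf{w}_n = \mathbf{\Omega}_{-n}^{-1/2}\mathbf{u}$. This yields precisely the closed form in \eqref{w_beam}.

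The only mild subtlety, and the step I would be most careful about, is justifying the scale-invariance argument and handling the (non-generic) degenerate case where the maximal eigenvalue of $\mathbf{\Omega}_{-n}^{-1}\mathbf{\Omega}_n$ has multiplicity greater than one; in that case any unit vector in the maximal eigenspace attains the optimum, so ${\rm eigvec}\{\cdot\}$ should be read as selecting some vector from that eigenspace. Once this is acknowledged, the remainder of the argument is a direct application of the Rayleigh-Ritz theorem, so no further heavy machinery is needed.
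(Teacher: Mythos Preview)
Your argument is correct and follows essentially the same route as the paper: reduce the per-$n$ problem to maximizing the SINR, rewrite the SINR as the generalized Rayleigh quotient $\mathbf{w}_n^H\mathbf{\Omega}_n\mathbf{w}_n/\mathbf{w}_n^H\mathbf{\Omega}_{-n}\mathbf{w}_n$, and conclude that the optimizer is the principal generalized eigenvector of the pair $(\mathbf{\Omega}_n,\mathbf{\Omega}_{-n})$. Your write-up is in fact more careful than the paper's, which simply asserts the SINR-maximization reduction and the eigenvector characterization without the decoupling/feasibility discussion or the $\mathbf{\Omega}_{-n}^{1/2}$ change of variables.
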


\begin{proof}
See Appendix A.
\end{proof}

With the help of auxiliary variables $\mathbf{t}=\{t_n\}_{n=1}^N$, problem \eqref{Access_power_beam} over the UEs' transmit power vector $\mathbf{p}^{\mathrm{u}}$ for fixed $\mathbf{w}$ can be equivalently transformed as
\begin{align}\label{P1_max_sub1}
&\mathop {\min }\limits_{\mathbf{p}^{\mathrm{u}},\mathbf{t}}~ \sum\limits_{n{\rm{ = }}1}^N I_n t_n\\
&~\mathrm{s.t.} ~~\mathrm{\widetilde{C}1}: \frac{p_n^{\mathrm{u}}}{R_n^{\mathrm{a}}(\mathbf{p}^{\mathrm{u}},\mathbf{w}_n)}\leq t_n,\ \forall n \in \mathcal{N},\nonumber\\
&~\qquad \mathrm{\widetilde{C}2}: \gamma_n^{\mathrm{a}}(\mathbf{p}^{\mathrm{u}},\mathbf{w}_n)\geq \tau,\ \forall n \in \mathcal{N}, \nonumber\\
&~\qquad \mathrm{\widetilde{C}3}: 0\leq p_n^{\mathrm{u}} \leq P_{\max}^{\rm u}, \ \forall n \in \mathcal{N},   \nonumber
\end{align}
where $\tau=2^{\frac{I_n}{B^{\mathrm{a}}\left(T_\mathrm{th}-\widehat{c}_n T_n^{\mathrm{edge}}\right)}}-1$.

\begin{lemma}\label{lemma2}
The optimal solution $\left(\mathbf{p}^{\mathrm{u}*},\mathbf{t}^*\right)$ of problem \eqref{P1_max_sub1}  satisfies the  Karush-Kuhn-Tucker (KKT) conditions of the following $N$ $(n\in\mathcal{N})$ subproblems
\begin{align}\label{P1_max_sub1_sub2}
&\mathop {\min }\limits_{p_n^{\mathrm{u}}} ~\left(\lambda_n+M_n \right)p_n^{\mathrm{u}}- \lambda_n t_n R_n^{\mathrm{a}}(\mathbf{p}^{\mathrm{u}},\mathbf{w}_n)\\
&~\mathrm{s.t.} ~~~\mathrm{\widetilde{C}2}:  \gamma_n^{\mathrm{a}}(\mathbf{p}^{\mathrm{u}},\mathbf{w}_n)\geq \tau,   \nonumber\\
&~~~~~~~~\mathrm{\widetilde{C}3}: 0\leq p_n^{\mathrm{u}} \leq P_{\max}^{\rm u},   \nonumber
\end{align}
with
\begin{align}\label{M_n_num}
\quad M_n=\sum_{j=1, j\neq n}^N\lambda_j t_j \frac{B_\mathrm{a}}{\ln2}\frac{ \left(\gamma_j^{\mathrm{a}} \right)^2 |\mathbf{w}_j^H
{\mathbf{h}_{n,j}^{\rm a}}|^2}{p_j^{\mathrm{u}} |\mathbf{w}_j^H{\mathbf{h}_{j,j}^{\rm a}}|^2 \left(1+\gamma_j^{\mathrm{a}}\right)}+
\sum_{j=1, j\neq n}^N\mu_j \frac{ \left(\gamma_j^{\mathrm{a}} \right)^2 |\mathbf{w}_j^H {\mathbf{h}_{n,j}^{\rm a}}|^2}{p_j^{\mathrm{u}}
|\mathbf{w}_j^H {\mathbf{h}_{j,j}^{\rm a}}|^2 },
\end{align}
where $\{\lambda_n\}_{n=1}^N$ and $\{{\mu_n}\}_{{n}=1}^N$ are Lagrange multipliers associated with the constraints $\mathrm{\widetilde{C}1}$ and $\mathrm{\widetilde{C}2}$ of problem \eqref{P1_max_sub1}, respectively, and $M_n=-\sum_{j\neq n}^N\lambda_j t_j\frac{\partial R_j^{\mathrm{a}} }{\partial p_n^{\mathrm{u}}}-\sum_{j\neq n}^N\mu_j\frac{\partial \gamma_j^\mathrm{a}}{\partial p_n^{\mathrm{u}}}$. For optimal $\left(\mathbf{p}^{\mathrm{u}*},\mathbf{w}^*\right)$, $\lambda_n$ and $t_n$ are respectively calculated as
\begin{align}
  \lambda_n=\frac{I_n}{R_n^{\mathrm{a}}\left(\mathbf{p}^{\mathrm{u}*},\mathbf{w}_n^*\right)},\label{eq:lambda_n}\\
  t_n=\frac{p_n^{\mathrm{u}*}}{R_n^{\mathrm{a}}\left(\mathbf{p}^{\mathrm{u}*},\mathbf{w}_n^*\right)}.\label{eq:t_n}
\end{align}
\end{lemma}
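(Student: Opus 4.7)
The plan is to build the Lagrangian of the joint problem \eqref{P1_max_sub1}, write down its KKT system, and then match it term-by-term with the KKT system of the per-user subproblems \eqref{P1_max_sub1_sub2}. First I would associate multipliers $\{\lambda_n\}$ with $\widetilde{\mathrm{C}}1$ and $\{\mu_n\}$ with $\widetilde{\mathrm{C}}2$ (the box constraints $\widetilde{\mathrm{C}}3$ can be carried through with the usual nonnegativity multipliers, which I suppress for brevity since they only contribute standard boundary terms). The stationarity condition with respect to $t_n$ reads $I_n-\lambda_n R_n^{\mathrm{a}}=0$, which immediately yields \eqref{eq:lambda_n}. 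Moreover, because $t_n$ enters the objective with a strictly positive weight $I_n$, primal optimality forces $\widetilde{\mathrm{C}}1$ to be active at $\mathbf{p}^{\mathrm{u}*}$, producing \eqref{eq:t_n}.

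Next I would differentiate the Lagrangian with respect to $p_n^{\mathrm{u}}$. Since $R_j^{\mathrm{a}}$ and $\gamma_j^{\mathrm{a}}$ depend on $p_n^{\mathrm{u}}$ for every $j$ (through the interference term when $j\neq n$, and directly when $j=n$), the stationarity condition has the form
\begin{equation*}
\lambda_n-\lambda_n t_n\frac{\partial R_n^{\mathrm{a}}}{\partial p_n^{\mathrm{u}}}-\mu_n\frac{\partial\gamma_n^{\mathrm{a}}}{\partial p_n^{\mathrm{u}}}-\sum_{j\neq n}\lambda_j t_j\frac{\partial R_j^{\mathrm{a}}}{\partial p_n^{\mathrm{u}}}-\sum_{j\neq n}\mu_j\frac{\partial\gamma_j^{\mathrm{a}}}{\partial p_n^{\mathrm{u}}}+\text{(box terms)}=0,
\end{equation*}
and the last two sums are exactly the definition of $M_n$ given just after \eqref{M_n_num}. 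The key observation is that this is precisely the stationarity condition of subproblem \eqref{P1_max_sub1_sub2} once $M_n$ is treated as a constant, because differentiating the subproblem objective $(\lambda_n+M_n)p_n^{\mathrm{u}}-\lambda_n t_n R_n^{\mathrm{a}}$ in $p_n^{\mathrm{u}}$ reproduces the first three terms above, and the subproblem shares the constraints $\widetilde{\mathrm{C}}2$, $\widetilde{\mathrm{C}}3$ (so the multipliers $\mu_n$ and the box multipliers carry over unchanged). Complementary slackness and primal/dual feasibility of the joint problem are inherited verbatim by each subproblem, completing the KKT matching.

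Finally I would make the closed form of $M_n$ in \eqref{M_n_num} explicit by computing the two partial derivatives. Writing $D_j\triangleq\sum_{i\neq j}p_i^{\mathrm{u}}|\mathbf{w}_j^H\mathbf{h}_{i,j}^{\mathrm{a}}|^2+|\mathbf{w}_j^H\mathbf{n}_j|^2$ so that $\gamma_j^{\mathrm{a}}=p_j^{\mathrm{u}}|\mathbf{w}_j^H\mathbf{h}_{j,j}^{\mathrm{a}}|^2/D_j$, direct differentiation gives, for $j\neq n$,
\begin{equation*}
\frac{\partial\gamma_j^{\mathrm{a}}}{\partial p_n^{\mathrm{u}}}=-\frac{(\gamma_j^{\mathrm{a}})^2\,|\mathbf{w}_j^H\mathbf{h}_{n,j}^{\mathrm{a}}|^2}{p_j^{\mathrm{u}}|\mathbf{w}_j^H\mathbf{h}_{j,j}^{\mathrm{a}}|^2},\qquad \frac{\partial R_j^{\mathrm{a}}}{\partial p_n^{\mathrm{u}}}=\frac{B^{\mathrm{a}}}{\ln 2}\,\frac{1}{1+\gamma_j^{\mathrm{a}}}\,\frac{\partial\gamma_j^{\mathrm{a}}}{\partial p_n^{\mathrm{u}}}.
\end{equation*}
Substituting these expressions into $M_n=-\sum_{j\neq n}\lambda_j t_j\,\partial R_j^{\mathrm{a}}/\partial p_n^{\mathrm{u}}-\sum_{j\neq n}\mu_j\,\partial\gamma_j^{\mathrm{a}}/\partial p_n^{\mathrm{u}}$ yields exactly \eqref{M_n_num}.

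The main obstacle, and the only place one has to be careful, is bookkeeping the coupling in the stationarity condition: the $\lambda_j R_j^{\mathrm{a}}$ and $\mu_j\gamma_j^{\mathrm{a}}$ terms for $j\neq n$ produce derivatives through the interference in $D_j$, and these cross-terms must be collected cleanly into a single quantity $M_n$ so that what remains is purely an $n$-indexed problem with the same active-set structure as \eqref{P1_max_sub1_sub2}. Everything else, including the closed-form $\lambda_n$ and $t_n$ and the algebraic reduction of $M_n$, follows by routine calculus once this accounting is done correctly.
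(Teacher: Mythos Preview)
Your proposal is correct and follows essentially the same approach as the paper's proof: form the Lagrangian of \eqref{P1_max_sub1}, write the KKT system (stationarity in $t_n$ gives \eqref{eq:lambda_n}, activity of $\widetilde{\mathrm{C}}1$ gives \eqref{eq:t_n}, stationarity in $p_n^{\mathrm{u}}$ produces the cross-terms collected into $M_n$), and then verify that these conditions coincide with the KKT conditions of the per-user subproblems \eqref{P1_max_sub1_sub2}. The only cosmetic difference is that the paper deduces the activity of $\widetilde{\mathrm{C}}1$ from complementary slackness together with $\lambda_n=I_n/R_n^{\mathrm{a}}>0$, whereas you argue it directly from primal optimality; both are valid.
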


\begin{proof}
See Appendix B.
\end{proof}

Given  $\lambda_n$ and $t_n$, the subproblem \eqref{P1_max_sub1_sub2} is convex w.r.t. $p_n^{\mathrm{u}}$. Therefore, we have the following theorem.

\begin{theorem}\label{theorem_1}
The solution of subproblem \eqref{P1_max_sub1_sub2} is given by
\begin{align}
&&&p_n^{\mathrm{u}*}=\left\{
\begin{aligned}
&\frac{\tau}{\Lambda_n}, ~~~~\mathrm{if}~G_n<\frac{\tau}{\Lambda_n}, \\
&G_n, ~~~~\mathrm{if}~\frac{\tau}{\Lambda_n} \leq G_n \leq  P_{\max}^{\rm u}, \\
&P_{\max}^{\rm u}, ~~\mathrm{if}~G_n > P_{\max}^{\rm u},
\end{aligned}\right. \label{theo_1_power}\\
&&&\mu_n^*=\left\{
\begin{aligned}
&\frac{\lambda_n+M_n}{\Lambda_n}-\frac{B_\mathrm{a}}{\ln2} \frac{\lambda_n t_n}{\tau+1}, \quad\mathrm{if}~G_n<\frac{\tau}{\Lambda_n}, \\
&0, \quad\mathrm{otherwise}, \\
\end{aligned}\right. \label{theo_1_mu}\\
&&&\nu_n^*=\left\{
\begin{aligned}
&0, \quad\mathrm{if}~G_n \leq  P_{\max}^{\rm u}, \\
&\frac{B_\mathrm{a}}{\ln2} \frac{\lambda_n t_n }{P_{\max}^{\rm u}+1/\Lambda_n}-\lambda_n-M_n, ~\mathrm{otherwise}, \\
\end{aligned}\right. \label{theo_1_nu}
\end{align}
where we define $\Lambda_n\triangleq\frac{|\mathbf{w}_n^H {\mathbf{h}_{n,n}^{\rm a}}|^2}{\sum_{i=1,i\neq n}^{N}
p_i^{\mathrm{u}}|\mathbf{w}_n^H {\mathbf{h}_{i,n}^{\rm a}}|^2+|\mathbf{w}_n^H\mathbf{n}_n|^2}$, $G_n\triangleq\frac{B_\mathrm{a}}{\ln2}
\frac{\lambda_n t_n }{\lambda_n+M_n}-\frac{1}{\Lambda_n}$, and $\mu_n^*$ and $\nu_n^*$ are respectively  the optimal Lagrange multipliers associated with the constraints $\widetilde{\mathrm{C}}2$ and $\widetilde{\mathrm{C}}3$ of problem \eqref{P1_max_sub1_sub2}.
\end{theorem}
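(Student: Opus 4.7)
The plan is to treat problem \eqref{P1_max_sub1_sub2} as a single-variable convex program in $p_n^{\mathrm{u}}$, holding $\lambda_n$, $t_n$, $M_n$ and $\{\mathbf{w}_i, p_i^{\mathrm{u}}\}_{i\neq n}$ fixed as given by Lemma \ref{lemma2}, and then derive the closed-form minimizer through a standard KKT case analysis. Convexity is immediate: $R_n^{\mathrm{a}}=B^{\mathrm{a}}\log_2(1+p_n^{\mathrm{u}}\Lambda_n)$ is concave in $p_n^{\mathrm{u}}$ so $-\lambda_n t_n R_n^{\mathrm{a}}$ is convex, the term $(\lambda_n+M_n)p_n^{\mathrm{u}}$ is linear, and both $\mathrm{\widetilde{C}2}$ (equivalently $p_n^{\mathrm{u}}\geq\tau/\Lambda_n$) and $\mathrm{\widetilde{C}3}$ are affine box constraints; note that $\mathrm{\widetilde{C}2}$ already enforces $p_n^{\mathrm{u}}\geq 0$, so only the SINR floor and the upper power bound are binding candidates.

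First I would form the Lagrangian
\[
\mathcal{L}=(\lambda_n+M_n)p_n^{\mathrm{u}}-\lambda_n t_n B^{\mathrm{a}}\log_2(1+p_n^{\mathrm{u}}\Lambda_n)+\mu_n(\tau-p_n^{\mathrm{u}}\Lambda_n)+\nu_n(p_n^{\mathrm{u}}-P_{\max}^{\mathrm{u}}),
\]
with multipliers $\mu_n,\nu_n\geq 0$, and write the stationarity condition
\[
(\lambda_n+M_n)-\frac{B^{\mathrm{a}}}{\ln 2}\cdot\frac{\lambda_n t_n \Lambda_n}{1+p_n^{\mathrm{u}}\Lambda_n}-\mu_n\Lambda_n+\nu_n=0.
\]
Setting $\mu_n=\nu_n=0$ in this identity and solving for $p_n^{\mathrm{u}}$ yields the unconstrained stationary point $p_n^{\mathrm{u}}=G_n$, which is precisely the middle branch of \eqref{theo_1_power}.

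Next I would exhaust the remaining two KKT cases. If $G_n<\tau/\Lambda_n$, the interior optimum violates $\mathrm{\widetilde{C}2}$, so that constraint must be tight; complementary slackness forces $p_n^{\mathrm{u}*}=\tau/\Lambda_n$ and $\nu_n^*=0$, and substituting $1+p_n^{\mathrm{u}*}\Lambda_n=1+\tau$ into the stationarity identity and solving for $\mu_n$ gives \eqref{theo_1_mu}. Symmetrically, if $G_n>P_{\max}^{\mathrm{u}}$, the upper power bound must be active, so $p_n^{\mathrm{u}*}=P_{\max}^{\mathrm{u}}$, $\mu_n^*=0$, and rearranging stationarity with $1+p_n^{\mathrm{u}*}\Lambda_n=1+P_{\max}^{\mathrm{u}}\Lambda_n$ produces $\nu_n^*$ as in \eqref{theo_1_nu}.

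The main thing to verify---rather than any heavy calculation---is that the three cases are mutually exclusive and exhaustive and that the produced multipliers respect dual feasibility. Specifically, I would check that $\mu_n$ and $\nu_n$ cannot be simultaneously positive (that would require $\tau/\Lambda_n=P_{\max}^{\mathrm{u}}$, a degenerate situation subsumed by either boundary branch), that the $\mu_n^*$ in \eqref{theo_1_mu} is non-negative precisely when $G_n<\tau/\Lambda_n$, and that the $\nu_n^*$ in \eqref{theo_1_nu} is non-negative precisely when $G_n>P_{\max}^{\mathrm{u}}$; both sign checks follow by rearranging the defining relation of $G_n$. Once these consistency conditions are in place, convexity elevates the KKT candidate to the unique global minimizer, and \eqref{theo_1_power}--\eqref{theo_1_nu} follow directly.
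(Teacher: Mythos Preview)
Your proposal is correct and follows essentially the same route as the paper: form the Lagrangian of the single-variable convex subproblem, write the stationarity condition together with complementary slackness for $\mu_n$ and $\nu_n$, identify the interior stationary point $G_n$, and then split into the three cases according to where $G_n$ falls relative to $[\tau/\Lambda_n,P_{\max}^{\mathrm{u}}]$. Your explicit verification of dual feasibility (the sign checks on $\mu_n^*$ and $\nu_n^*$) is a small addition beyond what the paper spells out, but the overall argument is the same KKT case analysis.
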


\begin{proof}
See Appendix C.
\end{proof}

In light of the results in \textbf{Lemma}~\ref{lemma1}, \textbf{Lemma}~\ref{lemma2} and \textbf{Theorem}~\ref{theorem_1}, we provide an iterative approach to effectively solve problem \eqref{Access_power_beam}, which is shown in \textbf{Algorithm}~\ref{algorithmic1}.

\begin{algorithm}[!htp]
{\small{
\caption{ Solution of Problem \eqref{Access_power_beam}}\label{algorithmic1}
\begin{algorithmic}[1]
\STATE \textbf{Initialize} $p_n^{\mathrm{u}}=P_{\max}^{\rm u}$,  $\forall n$. Set $\mathbf{w}_n$ based on \textbf{Lemma}~\ref{lemma1}.
\STATE \textbf{Repeat}
\STATE \qquad a) Given $\mathbf{w}$, Loop: \\
               \qquad \quad i): Compute $M_n$, $\lambda_n$ and $t_n$ based on \textbf{Lemma}~\ref{lemma2}. \\
               \qquad \quad ii): Update $p_n^{\mathrm{u}}$ and $\mu_n$ based on \textbf{Theorem}~\ref{theorem_1}. \\
                \qquad \quad Until convergence.
\STATE \qquad b) Update $\mathbf{w}$ based on \textbf{Lemma}~\ref{lemma1}.
\STATE \;\;\; Until convergence, and obtain the optimal $\{\mathbf{p}^{\mathrm{u}*},\mathbf{w}^*\}$.
\end{algorithmic}
}}
\end{algorithm}

The convergence of \textbf{Algorithm}~\ref{algorithmic1} can be guaranteed since the objective function of problem~\eqref{Access_power_beam} decreases with the iteration index (in step 3 and step 4 of \textbf{Algorithm}~\ref{algorithmic1}), which is indicated from optimizing $\mathbf{p}^\mathrm{u}$ and $\mathbf{w}$ in each iteration as shown in \textbf{Lemma}~\ref{lemma1} and \textbf{Lemma}~\ref{lemma2}, respectively.

\subsection{SBSs' Transmit Covariance Matrices}
For fixed cloud selection decision $\mathbf{\widehat{c}}$, the optimal $\mathbf{Q}^*$ can be obtained by solving the following subproblem:
\begin{align}\label{P1_2_Q}
&\mathop {\min }\limits_{\mathbf{Q} }~y\left(\mathbf{Q}\right)=\sum\limits_{n{\rm{ = }}1}^N {\left( {1{\rm{ - }}{ \widehat{c}_n } } \right)
\mathrm{tr}\left(\mathbf{Q}_n\right) T_n^{\mathrm{central}}(\mathbf{Q}) } \\
&~\mathrm{s.t.}~~\mathrm{\widehat{C}2}: R_n^{\mathrm{b}}(\mathbf{Q}) \geq \left(1-\widehat{c}_n\right)\frac{I_n}{\alpha T_n^{\mathrm{edge}}},\ \forall n \in \mathcal{N}, ~~\mathrm{C5}, \nonumber
\end{align}
where $\mathrm{\widehat{C}}2$ and $\mathrm{C}5$ are the corresponding  constraints expressed in problem \eqref{P1_relax} and \eqref{P1}, respectively, and  $\mathrm{\widehat{C}}2$ is re-expressed in an equivalent form here.
Problem \eqref{P1_2_Q} is non-convex due to the non-convexity of the objective function and constraint $\mathrm{\widehat{C}}2$, which cannot be solved directly. Thus, we resort to a successive pseudoconvex approach, which has many advantages such as fast convergence and  parallel computation~\cite{J_Y.Yang17A_unified}. 

First, let $\mathbf{Q}^l$ denote the $\mathbf{Q}$ value in the $l$-th iteration. Thus the non-convex item $\mathrm{tr}\left(\mathbf{Q}_n\right) T_n^{\mathrm{central}}(\mathbf{Q})$ for each $n\in\mathcal{N}$ in the objective function can be approximated as a pseudoconvex function at $\mathbf{Q}^l$, which is written as
\begin{align}\label{pseudo_convex_obj}
\widehat{y}_n(\mathbf{Q}_n;\mathbf{Q}^l)\triangleq
 \frac{I_n\mathrm{tr}(\mathbf{Q}_n)}{R_n^{\rm b}(\mathbf{Q}_n;\mathbf{Q}^l)}+\chi_n(\mathbf{Q}_n),
\end{align}
where $\chi_n(\mathbf{Q}_n)=\sum_{j\neq n} I_j\mathrm{tr}(\mathbf{Q}_j^l)\Big\langle(\mathbf{Q}_n-\mathbf{Q}_n^l),
  \nabla_{\mathbf{Q}_n^\dagger}\frac{{1-\widehat{c}_j}}{R_j^{\rm b}(\mathbf{Q}^l)}\Big\rangle$ is a function obtained by linearizing the non-convex function $\sum_{j\neq n}^N\mathrm{tr}\left(\mathbf{Q}_j\right) T_j^{\mathrm{central}}(\mathbf{Q})$ in $\mathbf{Q}_n$ at the point $\mathbf{Q}^l$. 
Based on \eqref{pseudo_convex_obj},  we can approximate the objective function $y\left(\mathbf{Q}\right)$ of problem \eqref{P1_2_Q} at $\mathbf{Q}^l$ as
\begin{align}\label{obj_approx_Qq}
\widehat{y}(\mathbf{Q};\mathbf{Q}^l)=\sum\limits_{n{\rm{ = }}1}^N \left( {1{\rm{ - }}{\widehat{c}_n}} \right)\widehat{y}_n(\mathbf{Q}_n;\mathbf{Q}^l).
\end{align}
It is easily seen that $\widehat{y}(\mathbf{Q};\mathbf{Q}^l)$ is pseudoconvex and has the same gradient with $y\left(\mathbf{Q}\right)$ at $\mathbf{Q}=\mathbf{Q}^l$ \cite{J_Y.Yang17A_unified}.

Then, by equivalently rewriting the non-concave function $R_n^{\mathrm{b}}(\mathbf{Q})$ in constraint $\mathrm{\widehat{C}2}$ as a difference of two concave functions as expressed in \eqref{Q_constra_1a}  according to its definition in \eqref{backhaul_rate}, and leveraging the first-order Taylor expansion at $\mathbf{Q}^l$ for the second concave function denoted as  ${R}_n^{\mathrm{b2}}(\mathbf{Q})=B^\mathrm{b} \log_2\det\left(\sigma^2\mathbf{I}+ \sum_{i\neq n}^N\mathbf{H}_i^\mathrm{b} \mathbf{Q}_i \left(\mathbf{H}_i^\mathrm{b}\right)^H\right)$, $R_n^{\mathrm{b}}(\mathbf{Q})$ can be approximated as
\begin{subeqnarray} \label{Q_constra_1}
\hspace{-6mm}R_n^{\mathrm{b}}(\mathbf{Q})&&\hspace{-6mm}=B^\mathrm{b} \log_2\det\left(\sigma^2\mathbf{I}+\mathbf{\Xi}(\mathbf{Q})\right)-{R}_n^{\mathrm{b2}}(\mathbf{Q}) \slabel{Q_constra_1a}\\
&&\hspace{-6mm}\geqslant
B^\mathrm{b} \log_2\det\left(\sigma^2\mathbf{I}+\mathbf{\Xi}(\mathbf{Q})\right)-{R}_n^{\mathrm{b2}}(\mathbf{Q}^l)-\nonumber\\
&&\hspace{-6mm} \sum_{j\neq n}^N\Big\langle(\mathbf{Q}_j-\mathbf{Q}_j^l),
  \nabla_{\mathbf{Q}_j^\dagger}R_n^{\mathrm{b2}}(\mathbf{Q}^l)\Big\rangle\triangleq\bar{R}_n^{\mathrm{b}}(\mathbf{Q}), \slabel{Q_constra_1b}
\end{subeqnarray}
where $\mathbf{\Xi}\left(\mathbf{Q}\right)=\sum_{i=1}^N\mathbf{H}_i^\mathrm{b} \mathbf{Q}_i \left(\mathbf{H}_i^\mathrm{b}\right)^H$.  Here, $\bar{R}_n^{\mathrm{b}}(\mathbf{Q})$ expressed in \eqref{Q_constra_1b} is a concave function over $\mathbf{Q}$.

Therefore, at $\mathbf{Q}^l$, the original problem \eqref{P1_2_Q} can be approximately transformed as
\begin{align}\label{Q_l_app_1}
&\mathop {\min }\limits_{\mathbf{Q} }~\widehat{y}(\mathbf{Q};\mathbf{Q}^l) \\
&~\mathrm{s.t.} ~~\mathrm{\overline{C}2}: \bar{R}_n^{\mathrm{b}}(\mathbf{Q})\geq \left(1-\widehat{c}_n\right)\frac{I_n}{\alpha T_n^{\mathrm{edge}}},\ \forall n \in \mathcal{N},  ~~\mathrm{C5}. \nonumber
\end{align}
The objective function of problem \eqref{Q_l_app_1} is a sum  of $N$ pseudoconvex functions each containing  a  fractional function and  a linear function. In addition, all the constraints in problem \eqref{Q_l_app_1} are convex.
Hence, by leveraging the Dinkelbach-like algorithm~\cite{J_A.Zappone15Energy} and introducing a set of auxiliary variables for the $N$
fractional functions in the objective function, problem \eqref{Q_l_app_1} can be transformed into a solvable convex  optimization problem,
which can be effectively solved by CVX \cite{B_Grant08CVX} and owns provable convergence~\cite{J_Y.Yang17A_unified}.  
Let $\mathbb{B}\mathbf{Q}^l$ represent the optimal solution of problem \eqref{Q_l_app_1} at the $l$-th iteration,
and thus the value of $\mathbf{Q}$  in the next $(l+1)$-th iteration can be updated as
\begin{align}\label{eq:Q_update}
  \mathbf{Q}^{l+1}= \mathbf{Q}^{l}+\varsigma(l)(\mathbb{B}\mathbf{Q}^l-\mathbf{Q}^l),
\end{align}
where $\varsigma(l)$ is the step size at the $l$-th iteration and can be obtained through the successive line search, and $\mathbb{B}\mathbf{Q}^l-\mathbf{Q}^l$ is the descent direction of $y\left(\mathbf{Q}\right)$. Thus, the solution of problem \eqref{P1_2_Q} can be iteratively obtained.

Based on the aforementioned analysis of optimizing the variables $\{\mathbf{p}^{{\mathrm{u}}*},\mathbf{w}^*,\mathbf{Q}^*\}$, \textbf{Algorithm}~\ref{algorithmic2} is proposed to solve the original problem \eqref{P1}.

\begin{algorithm}[!htp]
{\small{
\caption{ Solution of Problem \eqref{P1}}\label{algorithmic2}
\begin{algorithmic}[1]
\STATE \textbf{Initialize} $p_n^{\mathrm{u}}=P_{\max}^{\rm u}$,  $\forall n$. Set $\mathbf{w}_n$ based on \textbf{Lemma}~\ref{lemma1}.\\
    \quad Based on the constraint $\mathrm{\widehat{C}3}$ of problem \eqref{P1_relax}, set {$\widehat{c}_n=\left[\min\left\{\frac{T_\mathrm{th}-T_n^\mathrm{a}(\mathbf{p}^{\mathrm{u}},\mathbf{w}_n)}
    {T_n^{\mathrm{edge}}},1-\delta\right\}\right]^+$, where $\delta\in(0,0.5)$ is a tolerant value to avoid the selection of solely edge clouds or central cloud at the initial point}. Then, based on the constraint $\mathrm{\widehat{C}2}$ of problem \eqref{P1_relax}, $\mathbf{Q}$ is set to meet $ T_n^{\mathrm{central}}(\mathbf{Q}) = \frac{\alpha T_n^{\mathrm{edge}}}{1-\widehat{c}_n}$ through the use of ZF precoding with equal power allocation at each SBS.
\STATE \textbf{Repeat}
\STATE \qquad a) Given ${\{\widehat{c}_n\}_{n=1}^N}$: \\
               \qquad \quad i): Update $\{\mathbf{p}^{\mathrm{u}},\mathbf{w}\}$ based on \textbf{Algorithm}~\ref{algorithmic1}. \\
               \qquad \quad ii): Loop: \\
                \qquad \qquad~~~ii-1): Solve problem \eqref{Q_l_app_1} via Dinkelbach-like algorithm~\cite{J_A.Zappone15Energy}. \\
                \qquad \qquad~~~ii-2): Update $\mathbf{Q}^l$ based on \eqref{eq:Q_update}.  \\
                 \qquad \qquad~~  Until convergence, and obtain the updated $\mathbf{Q}$.  \\
\STATE \qquad b) Update ${\{\widehat{c}_n\}_{n=1}^N}$ according to subsection III-A.
\STATE \;\;\; Until convergence, and obtain solution $\{\mathbf{c}^*,\mathbf{p}^{\mathrm{u}*},\mathbf{w}^*,\mathbf{Q}^*\}$,
in which $\mathbf{c}^*$ is obtained by rounding the cloud selection solution of problem \eqref{P1_relax}, i.e., $\mathbf{\widehat{c}}$, and $\mathbf{p}^{\mathrm{u}*},\mathbf{w}^*,\mathbf{Q}^*$ are obtained based on $\mathbf{c}^*$.
\end{algorithmic}
}}
\end{algorithm}

\subsection{Convergence and Complexity}
The convergence of \textbf{Algorithm}~\ref{algorithmic2} is easy to prove in light of the guaranteed convergence of \textbf{Algorithm}~\ref{algorithmic1}, the Dinkelbach-like algorithm used to solve problem~\eqref{Q_l_app_1} \cite{J_A.Zappone15Energy}, and the update process of the cloud selection $\mathbf{\widehat{c}}$ illustrated in Section~\ref{cloud_selection}. Note that the objective function of problem~\eqref{P1_relax} is a decreasing function of the iteration index (in step 3 and step 4 of \textbf{Algorithm}~\ref{algorithmic2}),  which ensures the convergence of \textbf{Algorithm}~\ref{algorithmic2}.

The proposed  \textbf{Algorithm}~\ref{algorithmic2} enjoys an acceptable complexity as well as an easy implementation. In each iteration,
the majority of computational complexity lies in solving subproblem \eqref{Access_power_beam} for obtaining the optimal
$(\mathbf{p}^{\mathrm{u}*},{\mathbf{w}}^*)$ and
the approximate subproblem \eqref{Q_l_app_1} for obtaining the optimal $\mathbf{Q}^*$ with a given $\mathbf{\widehat{c}}$. In the proposed algorithm, problem \eqref{Access_power_beam} can be equivalently
transformed into $N$ independent subproblems \eqref{P1_max_sub1_sub2} and  thus  can be easily solved in a parallel way. Moreover,
the optimal solution of each subproblem has closed-form expressions as indicated in \textbf{Theorem} \ref{theorem_1}, which only generates
a complexity ordered by $\mathcal{O}(N)$.
For the approximate subproblem \eqref{Q_l_app_1}, the Dinkelbach-like algorithm is proved to exhibit a linear convergence rate~\cite{J_A.Zappone15Energy} and the corresponding convex optimization problem can be efficiently solved by CVX, thus the generated complexity is acceptable in general.

In order to further reduce the complexity of solving the optimization problem for minimizing the network's total energy consumption, we will consider the case of applying the massive MIMO technology at the MBS in the following section. It demonstrates that the complexity of the proposed algorithm can be substantially reduced while  even better performance can be achieved compared to the case with traditional MIMO backhaul.

\section{Massive MIMO Backhaul}\label{sec:Special_MEC_ZF}
In the prior sections, we have studied the synergy of combining edge-central cloud computing with traditional multi-cell MIMO backhaul. Since massive MIMO has been one of the key 5G radio-access technologies, in this section, we further consider the time-division duplex (TDD) massive MIMO aided backhaul in the Rayleigh fading environment, i.e., MBS is equipped with a very large number of antennas and SBSs only use one single transmit antenna ($M \gg N$).

There are two main merits for massive MIMO backhaul transmission: 1) Since SBSs and MBSs are usually still and the backhaul channels will become deterministic, a phenomenon known as ``channel hardening''~\cite{Bjorson_mag_2016,hien_2018_pilot}, and thus the backhaul channel coherence time will be much longer than ever before, which means that the time spent on uplink channel estimation will be much lower. 
2) As shown in~\cite{Marzetta_2010_Nonc}, simple linear processing methods can achieve nearly-optimal performance. As a result, we will consider two linear detection schemes at the MBS, namely MRC and ZF, to  provide low-complexity massive MIMO backhaul solutions.
\subsection{MRC Receiver at the MBS}\label{sec:MRC}
When MRC receiver is applied at the MBS, we consider a lower-bound achievable backhaul rate for tractability, which can well
approximate the exact massive MIMO transmission rate as confirmed in~\cite{ngo2013energy}. 
As such, given the cloud selection decision $\mathbf{\widehat{c}}$, 
the backhaul related problem \eqref{P1_2_Q} reduces to
\begin{align}\label{P1_2_Q_massiveMIMO_MRC}
&\mathop {\min }\limits_{\mathbf{q} }~\sum\limits_{n{\rm{ = }}1}^N \left( {1{\rm{ - }}{\widehat{c}_n}} \right)
q_n \frac{I_n}{R_n^{\mathrm{b}}(\mathbf{q})}  \\
&~\mathrm{s.t.} ~~\mathrm{\widehat{C}2}: R_n^{\mathrm{b}}(\mathbf{q}) \geq \left(1-\widehat{c}_n\right)\frac{I_n}{\alpha T_n^{\mathrm{edge}}},\ \forall n\in \mathcal{N},\nonumber\\
&~~~~~~~\mathrm{C5}: q_n \geq 0,\ \forall n\in\mathcal{N}, \nonumber
\end{align}
where $q_n$ is the $n$-th SBS's transmit power, $\mathbf{q}=[q_1,\cdots,q_N]$, and
$R_n^{\mathrm{b}}(\mathbf{q})=B^\mathrm{b} \log_2\Big(1+\left(M-1\right)\frac{q_n \beta_n}{\sum\nolimits_{i{\rm{ = }}1{\rm{,}}i \ne n}^N {{q_i}{\beta _i}} +\sigma_n^2}\Big)$, 
in which $\beta_i$ is the large-scale fading coefficient of the link between SBS $i$ and the MBS~\cite{ngo2013energy}. Problem \eqref{P1_2_Q_massiveMIMO_MRC} is non-convex, but can be equivalent to problem \eqref{Access_power_beam} with $\mathbf{w}_n=1$. Thus, it can be directly solved by using \textbf{Algorithm}~\ref{algorithmic1}. Note that when using \textbf{Algorithm}~\ref{algorithmic1}, SBSs'  initial feasible transmit power vector $\mathbf{q}$ needs to be carefully selected. Here, we assume that the present fractional power control solution applied in 3GPP-LTE~\cite{A_He_2017} can satisfy the constraint $\mathrm{\widehat{C}2}$ in \eqref{P1_2_Q_massiveMIMO_MRC}, i.e., $q_n=(d_n)^{\epsilon \varpi^\mathrm{b}}$, where $d_n$ is the communication distance  between the $n$-th SBS and the MBS,  $\epsilon \in \left[0,1\right]$ is the pathloss compensation factor, and $\varpi^\mathrm{b}$ is the pathloss exponent of the backhaul link. For the special case of full compensation ($\epsilon=1$), the number of MBS's antennas needs to meet
\begin{align}\label{massiveAntenna_MRC}
M \geq 1+\left(N-1\right){\left( 2^{\frac{\left(1-\widehat{c}_n\right)I_n}{B^\mathrm{b} \alpha T_n^{\mathrm{edge}}}}-1\right)}.
\end{align}

\subsection{ZF Receiver at the MBS}\label{sec:ZF_masssive_MIMO}
When ZF receiver is applied at the MBS, we adopt  the corresponding tight lower-bound achievable rate shown in~\cite{ngo2013energy}.  Given the cloud selection decision $\mathbf{\widehat{c}}$, the backhaul related  problem \eqref{P1_2_Q} reduces to the following version
\begin{align}\label{P1_2_Q_massiveMIMO_ZF}
&\mathop {\min }\limits_{\mathbf{q} }~\sum\limits_{n{\rm{ = }}1}^N
 \left( {1{\rm{ - }}{\widehat{c}_n}} \right)  \frac{  q_n I_n} {R_n^{\mathrm{b}}(q_n)} \\
&~\mathrm{s.t.} ~~\mathrm{\widehat{C}2}: R_n^{\mathrm{b}}(q_n) \geq \left(1-\widehat{c}_n\right) \frac{ I_n}{\alpha T_n^{\mathrm{edge}}},\ \forall n \in \mathcal{N}, \nonumber\\
&~~~~~~~\mathrm{C5}: q_n \geq 0,\ \forall n \in \mathcal{N}, \nonumber
\end{align}
where $R_n^{\mathrm{b}}(q_n)=B^\mathrm{b} \log_2\Big(1+\left(M-N\right)\frac{q_n \beta_n}{\sigma_n^2}\Big)$.
Since $\frac{ q_n } {R_n^{\mathrm{b}}(q_n)}$ is an increasing function of $q_n$ ( $\frac{{\partial \big( {\frac{{{q_n}}}{R_n^{\mathrm{b}}(q_n)}} \big)}}{{\partial {q_n}}} \geq 0$), the optimal $q_n^*$ is the minimum value that meets the constraints $\mathrm{\widehat{C}2}$ and $\mathrm{C5}$ in \eqref{P1_2_Q_massiveMIMO_ZF}, i.e.,
\begin{align}\label{optimal_ZF}
{q_n^*}=\frac{2^{\frac{\left(1-\widehat{c}_n\right) I_n}{B^\mathrm{b} \alpha T_n^{\mathrm{edge}}}}-1}{\left(M-N\right)\frac{\beta_n}{\sigma_n^2}}, \ \forall n \in \mathcal{N}.
\end{align}

Based on the above analysis,  when massive MIMO backhaul is employed at the MBS, the solution of problem \eqref{P1} can still be obtained by using the proposed \textbf{Algorithm}~\ref{algorithmic2}, where the optimal SBSs' transmit powers are  given by the solution of problem \eqref{P1_2_Q_massiveMIMO_MRC} for the MRC receiver  or \eqref{optimal_ZF} for the ZF receiver.

In comparison with the case of using traditional MIMO backhaul, the MRC and ZF linear detection schemes for the case with massive MIMO backhaul links can enjoy super-low complexity. For  MRC scheme, the problem \eqref{P1_2_Q_massiveMIMO_MRC} can be effectively solved by  \textbf{Algorithm}~\ref{algorithmic1}, and its computational complexity is with the order of $\mathcal{O}(N)$. For ZF scheme, the closed-form solution of problem \eqref{P1_2_Q_massiveMIMO_ZF} can be directly obtained, and its complexity order is $\mathcal{O}(1)$. Hence, applying the massive MIMO technology at the MBS can significantly facilitate the cooperation between the edge and central cloud by providing easier but more efficient backhaul offloading for UEs  to  access  the central cloud computing services.

\setlength{\tabcolsep}{0.3 pt}\begin{table*}[thb]   
\centering
\caption{Simulation Parameters}\label{table1}
\begin{tabular}{|l|l|l|}
\hline
~\textbf{Parameter }&~{\textbf{Symbol}} &~{\textbf{Value}} \\
\hline
~Bandwidth for an access or backhaul link\quad\quad\quad\quad\quad\quad\quad\quad\quad\quad\quad &~$B^\mathrm{a}$,$B^\mathrm{b}$ \quad\quad\quad\quad\quad\quad\quad\quad &~10 MHz \quad\quad\quad\quad\quad\quad\quad \\
\hline
~Noise power spectral density for an access  or backhaul link &~$\sigma_n^2, n\in\mathcal{N}$, ~$\sigma^2$ &~-174 dBm/Hz \\
\hline
~Pathloss exponent for access link &~$\varpi^\mathrm{a}$ &~3.67 \\
\hline
~Pathloss exponent for backhaul link &~$\varpi^\mathrm{b}$ &~2.35 \\
\hline
~Pathloss compensation factor &~$\epsilon$ &~1 \\
\hline
~Radius of the small cell &~$r^\mathrm{a}$ &~50 m  \\
\hline
~Radius of the macro cell &~$r^\mathrm{b}$ &~500 m  \\
\hline
~Number of SBSs/UEs &~$N$ &~6  \\
\hline
~Number of antennas for each SBS &~$L$ &~2  \\
\hline
~UEs' maximum transmit power &~$P_{\max}^{\rm u}$ &~23 dBm  \\
\hline
~Required CPU cycles per bit &~$K_n,~n\in\mathcal{N}$ &~300 cycles/bit \\
\hline
~the effective switched capacitance of the SBSs' processor\quad\quad\quad &~$\varrho_n, ~n\in\mathcal{N}$ &~$10^{-28}$ \\
\hline
~The tolerant value in \textbf{Algorithm} 2\quad\quad\quad &~$\delta$ &~0.1 \\
\hline
\end{tabular}
\end{table*}

\section{Simulation Results}\label{sec:simulation}

In this section, simulation results are presented to evaluate the performance of the proposed algorithms and shed light on the effects of the key parameters including the ratio of energy consumption
between central and edge cloud computing ($\zeta_n=\zeta, n\in\mathcal{N}$), the task size ($I_n=I, n\in\mathcal{N}$), the latency threshold of edge processing
 ($T_{\mathrm{th}}$), the required fraction of edge computing time for backhaul transmission ($\alpha$), and the edge clouds' CPU clock frequency ($f_n=f, n\in\mathcal{N}$).
The performance of some practical schemes are also given as benchmarks, including the ``Edge-cloud-only", ``Central-cloud-only" schemes, and a scheme with fixed cloud selection, denoted as ``Half edge, half central" scheme where half number of UEs choose edge cloud and the other half use central cloud to complete their computing tasks. Besides, the ``Initial feasible solution", representing the case with the initial values setting in Algorithm \ref{algorithmic2}, is also given as a baseline to show the performance improvement of optimizing some system parameters.
Note that the performance indicators (the total energy consumption and the percentage of UEs that select edge cloud computing) shown in the following figures are averaged over 500 independent channel realizations.
All the small-scale fading channel coefficients follow independent and identically complex Gaussian distribution with zero mean and unit variance.
The pathloss between SBSs and UEs  and between MBS and SBSs are respectively set as $140.7+36.7\log_{10}d$(km) and $100.7+23.5\log_{10}d$(km) according to 3GPP TR 36.814~\cite{3GPP_2017}, where $d$ is the distance between two nodes.
The other basic simulation parameters are listed in Table~\ref{table1}.

\subsection{Improvement with Traditional MIMO Backhaul}\label{sec:traditional MIMO}
In this subsection, numerical results for the integrated edge and central cloud computing system with traditional MIMO backhaul are presented in comparison with the benchmarks mentioned before. These results can {properly demonstrate} the performance enhancement of using the proposed algorithm through jointly optimizing the key system parameters including cloud selection decision, UEs' transmit powers, SBSs' receive beamformers and transmit covariance matrices.



\begin{figure}[tbp]
\centering
\includegraphics[scale=0.55]{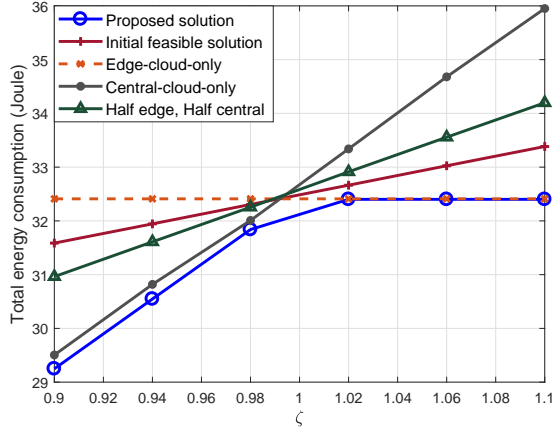}
\caption{The total energy consumption of the system with traditional MIMO backhaul versus the uniform computing energy ratio $\zeta$: $M=16$, $T_{\mathrm{th}}=0.3$ s, $\alpha=0.1$, $I=I_n=5$ Mbits, $f=f_n=6$ GHz for $n\in\mathcal{N}$.} 
\label{Energy_zeta_MIMO}
\end{figure}

Fig.~\ref{Energy_zeta_MIMO} shows the effect of the uniform computing energy ratio $\zeta=\zeta_n, n\in\mathcal{N}$ on the total energy consumption of the system.
We see that the energy consumption of all the schemes are non-decreasing functions of $\zeta$, due to the fact that the energy cost of central cloud computing increases with $\zeta$. It is confirmed that the proposed solution outperforms all the baselines, i.e., the energy cost can be significantly reduced.
The performance improvement is particularly noticeable compared with the Edge-cloud-only scheme in the range of $\zeta<1$,
the traditional Central-cloud-only scheme in the range of $\zeta>1$, and the ``Half edge, half central'' scheme in the whole
range of $\zeta$.  In addition, the proposed solution also consumes much less energy than the initial
feasible solution, demonstrating the performance enhancement of jointly optimizing the system parameters.

\begin{figure}
\centering
\includegraphics[scale=0.55]{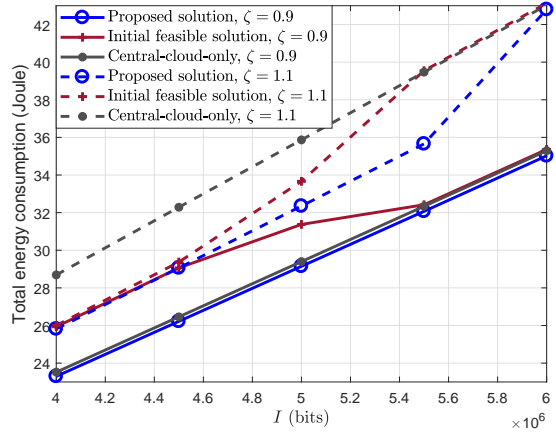}
\caption{The total energy consumption of the system with traditional MIMO backhaul versus the uniform task size $I$: $M=16$, $T_{\mathrm{th}}=0.3$ s,  $f=f_n=6$ GHz for $n\in\mathcal{N}$, $\alpha=0.1$.}
\label{Energy_I_MIMO}
\end{figure}

Fig.~\ref{Energy_I_MIMO} depicts the total energy consumption of the system versus the uniform task sizes $I=I_n, n\in\mathcal{N}$ for the cases of $\zeta=0.9$ and $\zeta=1.1$. It is easy to understand that computing more input data consumes more energy, and thus the energy cost of each scheme increases with $I$. Again, we see that the proposed solution is superior to the baseline solutions in all the cases. For the case of $\zeta=0.9$, the performance of the Central-cloud-only solution is very close to the proposed one since central cloud is dominant in this case, i.e., more UEs tend to use central cloud computing for saving energy. For the case of $\zeta=1.1$, the advantage of the proposed scheme becomes more obvious compared with the baselines, and actually this case is more common in practice since the central cloud tends to consume more energy for computing because of the higher CPU frequency.
We observe that the results of the proposed solution approach to those of the Central-cloud-only solution when $I$ becomes large, indicating that more UEs tend to select the central cloud for computing, i.e., central cloud computing plays an important role in dealing with relatively large tasks.
The reason is that when the task size is large, the edge processing latency  constraint $\mathrm{C3}$ of problem \eqref{P1} may be no longer satisfied due to the limited edge computing capability, and central cloud has to be chosen for computation.

\begin{figure}[tbp]
\centering
\includegraphics[scale=0.55]{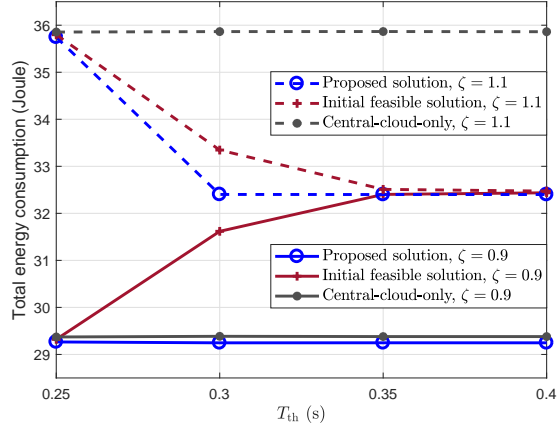}
\caption{The total energy consumption of the system with traditional MIMO backhaul versus the latency threshold of edge processing  $T_{\mathrm{th}}$: $M=16$, $I=I_n=5$ Mbits, $f=f_n=6$ GHz for $n\in\mathcal{N}$, $\alpha=0.1$.}
\label{Energy_T_MIMO}
\end{figure}
Fig.~\ref{Energy_T_MIMO} shows  the total energy consumption of the system varying with the latency threshold of edge processing for the cases of $\zeta=0.9$ and $\zeta=1.1$. It is seen that the proposed solution is a non-increasing function of $T_{\mathrm{th}}$ and outperforms the baselines in both cases. The Central-cloud-only solution is insensitive to $T_{\mathrm{th}}$, and its performance is almost
invariant
thanks to its super computing capability for low computing latency. Note that all the solutions consume almost same amount of energy
when $T_{\mathrm{th}}$ is small, e.g., $T_{\mathrm{th}}=0.25~\mathrm{s}$ in this figure.  The reason is  that the edge processing latency constraint  $\mathrm{C3}$ cannot be met and only central cloud computing can be employed to
satisfy the latency constraints.
For the case of $\zeta=0.9$, the performance gap between the proposed solution and the Central-cloud-only  is small since central cloud computing is dominant, and both solutions performs better than
the Initial feasible solution. It is interesting to note that the Initial feasible solution is an increasing function of
$T_{\mathrm{th}}\in [0.25,0.4]~\mathrm{s}$ when $\zeta=0.9$. This is because the edge cloud computing becomes more feasible as $T_{\mathrm{th}}$ increases, and the initial solution allowing more UEs to choose edge cloud for computing while in fact central cloud computing saves more energy, which indicates the importance of optimizing cloud selection in improving the system performance.
For the case of $\zeta=1.1$,  the consumed energy of the proposed solution decreases with $T_{\mathrm{th}}$ since more UEs are allowed to choose the energy-efficient edge cloud computing for large $T_{\mathrm{th}}$.

\subsection{Benefits of Massive MIMO Backhaul}\label{sec:massive MIMO}
In this subsection, we mainly illustrate the performance of the  considered heterogeneous  edge/central cloud computing system with massive MIMO backhaul, to confirm the benefits of equipping massive antennas at the MBS in improving the system performance. Here, we focus on MRC and ZF beamforming at the MBS, as studied in Section IV.

\begin{figure}[tbp]
\centering
\includegraphics[scale=0.55]{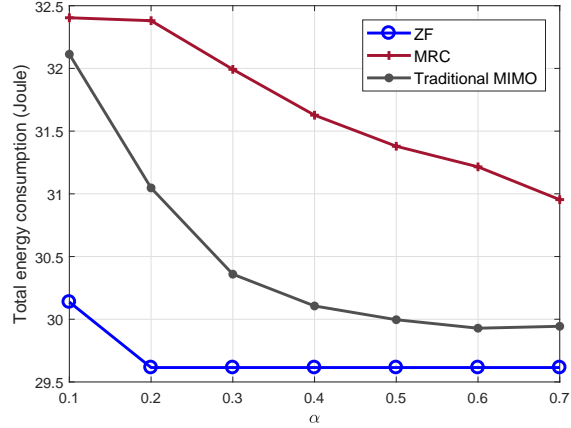}
\caption{The total energy consumption of the system versus  $\alpha$: $M=128$ for massive MIMO backhaul,  $M=8$ for traditional
MIMO backhaul,  $T_{\mathrm{th}}=0.3$ s, $\zeta=\zeta_n=0.9$, $I=I_n=5$ Mbits, $f=f_n=6$ GHz for $n\in\mathcal{N}$.} 
\label{Energy_alpha_mass_traditional}
\end{figure}

\begin{figure}[tbp]
\centering
\includegraphics[scale=0.55]{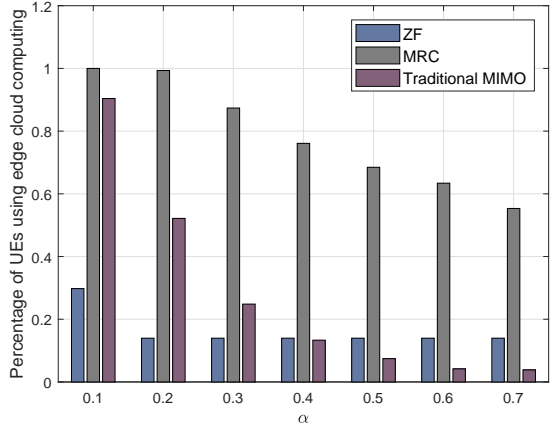}
\caption{The percentage of UEs that select edge cloud computing versus $\alpha$:  $M=128$ for massive MIMO backhaul,  $M=8$ for traditional MIMO backhaul,  $T_{\mathrm{th}}=0.3$ s, $\zeta=\zeta_n=0.9$, $I=I_n=5$ Mbits, $f=f_n=6$ GHz for $n\in\mathcal{N}$.}
\label{cn_alpha_mass_traditional}
\end{figure}
Fig.~\ref{Energy_alpha_mass_traditional} and Fig.~\ref{cn_alpha_mass_traditional}  depict the total energy consumption and the corresponding percentage of UEs that select edge cloud for computing versus $\alpha$, respectively. It is seen from Fig.~\ref{Energy_alpha_mass_traditional} that the energy consumption of each  scheme decreases with $\alpha$ since less power will be consumed for backhaul transmission with a higher $\alpha$  according to the backhaul latency constraint  $\mathrm{C2}$ of problem \eqref{P1}.
This result is also reflected by Fig.~\ref{cn_alpha_mass_traditional} where the percentage of UEs using edge cloud computing decreases, which means that more UEs choose to use central cloud for computing as $\alpha$ increases so as to save more energy. Obviously, the energy consumed by the ZF scheme is less than that of the MRC scheme and the solution with traditional MIMO backhaul, which demonstrates the benefits of using ZF beamforming and large antenna arrays at the MBS.
Moreover, for the ZF scheme, the percentage of UEs using edge cloud is lower than that of the MRC and traditional MIMO schemes when $\alpha<0.4$. In contrast, the MRC scheme only uses the edge cloud for computing when $\alpha\leq0.2$. This is because the backhaul latency constraint  $\mathrm{C2}$ in~\eqref{P1} for central cloud processing cannot be satisfied with a small $\alpha$ when MRC receiver is adopted at the MBS due to the inter-SBS interference. Based on these two figures, we  see that the consumed energy of the ZF scheme as well as the corresponding percentage of UEs served by edge cloud  decrease very slowly, and is almost unchanged for $\alpha\geq0.2$, which further indicates that the ZF scheme can provide more stable and higher-speed backhaul transmission for computation tasks offloading.
\begin{figure}[tbp]
\centering
\includegraphics[scale=0.55]{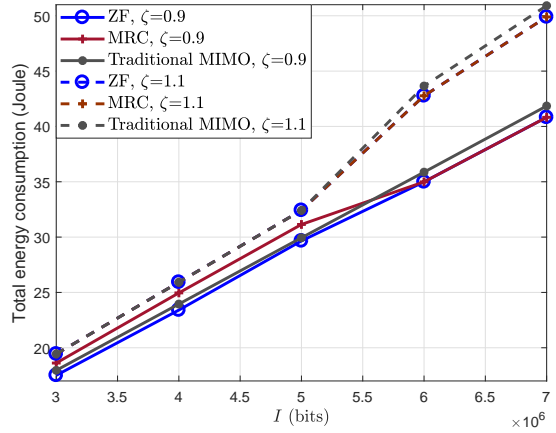}
\caption{The total energy consumption of the system versus the uniform  task size $I$: $M=128$ for massive MIMO backhaul,
$M=8$ for traditional MIMO backhaul, $T_{\mathrm{th}}=0.3$ s, $f=f_n=6$ GHz for $n\in\mathcal{N}$, $\alpha=0.6$.} 
\label{Energy_I_mass_traditional}
\end{figure}

\begin{figure}[tb]
\centering
\subfigure[$M=128$ for massive MIMO backhaul,  $M=8$ for traditional MIMO backhaul, $T_{\mathrm{th}}=0.3$ s, $\alpha=0.6$, $\zeta=\zeta_n=0.9$ for $n\in\mathcal{N}$.] {\label{Energy_f_mass_traditional_100}
\includegraphics[scale=0.52]{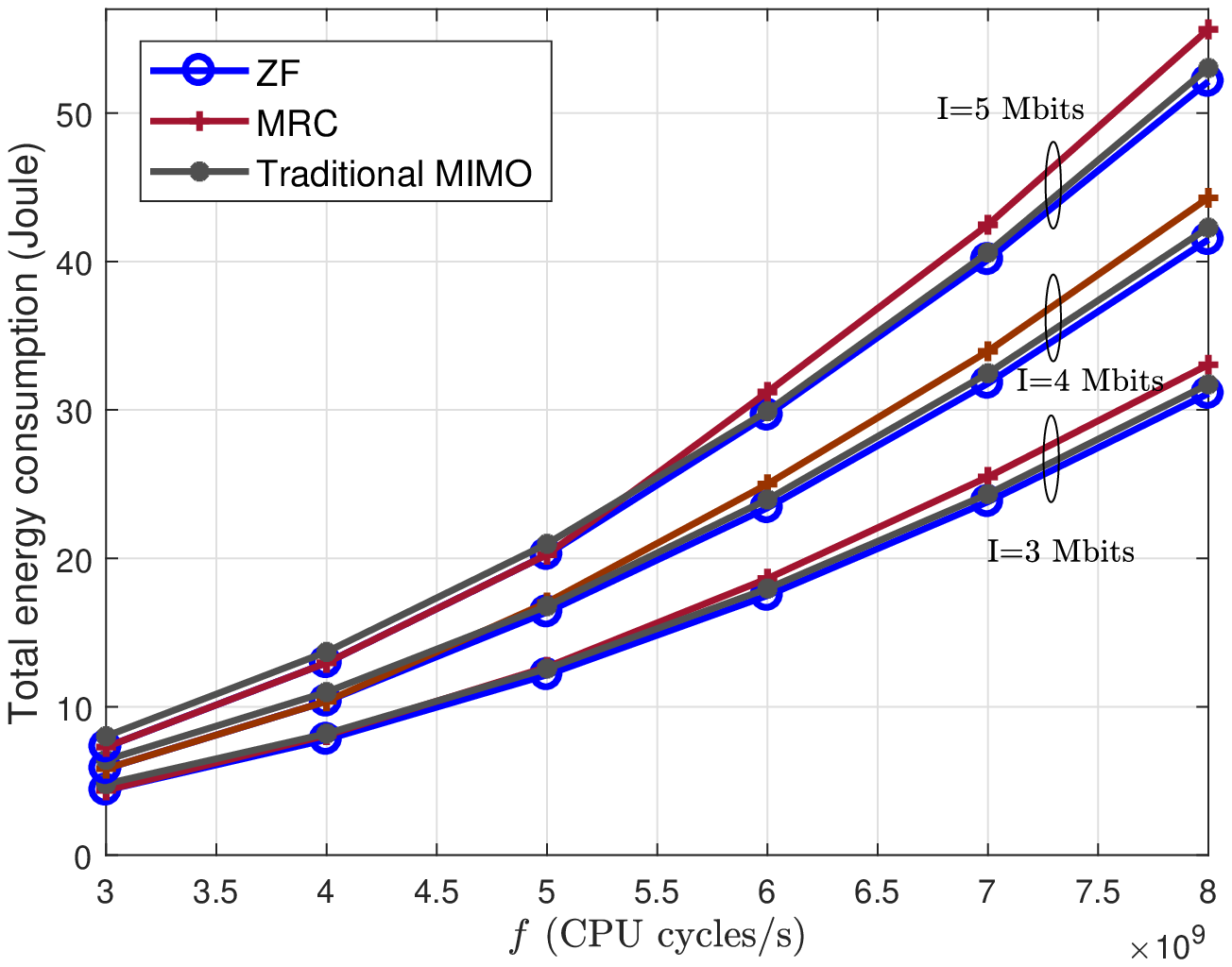}
}
\subfigure[ $M=128$ for massive MIMO backhaul,  $M=8$ for traditional MIMO backhaul,  $T_{\mathrm{th}}=0.3$ s, $\alpha=0.6$, $\zeta=\zeta_n=1.5$ for $n\in\mathcal{N}$.]{
\label{1Energy_f_mass_traditional_100}
  \includegraphics[scale=0.52]{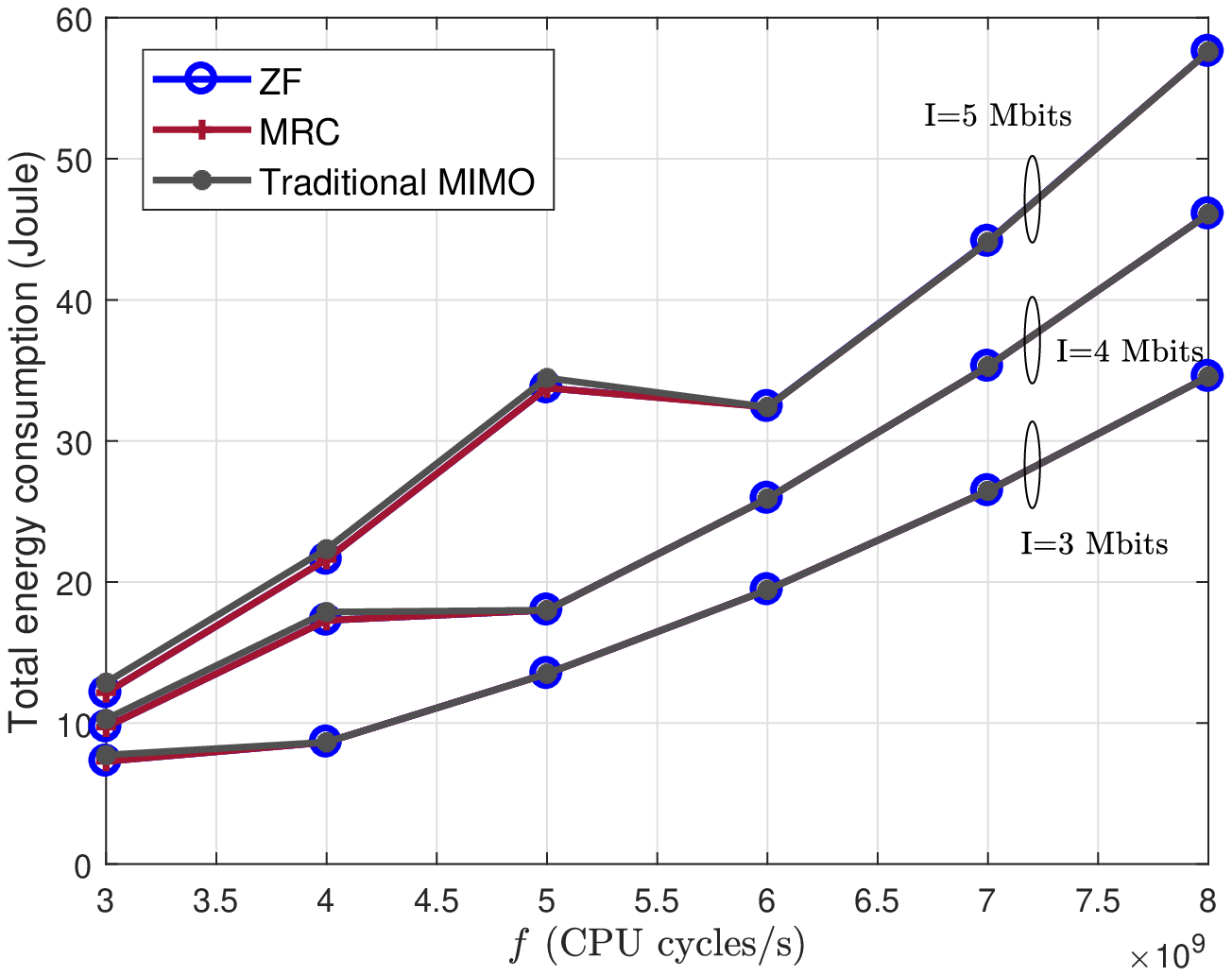}}
 \caption{The total energy consumption of the system versus SBSs' uniform CPU frequency $f$.}
\label{Fig8}
\end{figure}

Fig.~\ref{Energy_I_mass_traditional} shows the total energy consumption of the system versus the uniform task size $I$ for the cases of $\zeta=0.9$ and $\zeta=1.1$.  Similar to Fig.~\ref{Energy_I_MIMO}, all the curves increase with $I$ as expected. The ZF scheme outperforms the MRC scheme and the traditional MIMO scheme. For the case of $\zeta=0.9$, the ZF scheme and the traditional MIMO scheme are dominated by central cloud computing, while the MRC scheme experiences a gradual transition
from edge-cloud-dominant to central-cloud-dominant  and more UEs choose to use central cloud for computing so as to satisfy the
processing latency constraint as well as saving energy. For the case of $\zeta=1.1$, all the schemes are edge-cloud dominant when $I\leq5$ Mbits, and then gradually become central-cloud-dominant as $I$ increases. It is confirmed that the ZF scheme with massive MIMO backhaul has the advantage of handling the computation-intensive tasks.

Fig.~\ref{Energy_f_mass_traditional_100} and Fig.~\ref{1Energy_f_mass_traditional_100} depict the total energy consumption of the system versus the edge clouds' uniform CPU clock frequency $f=f_n, n\in\mathcal{N}$ in the case of $\zeta=0.9$ and $\zeta=1.5$, respectively.
According to these two figures, we see that the effect of $f$ is heavily reliant on both the computing task size $I$ and $\zeta$.
When $I$ is not large and $\zeta <1$, network's energy consumption may increase with $f$ as shown in Fig.~\ref{Energy_f_mass_traditional_100}, where the curves of all the schemes increase with $f$ and the increasing rates become higher when enlarging $I$. This is due to the fact that when $I$ is not large and $\zeta <1$,  the energy consumption of the central cloud computing plays a dominant role in contributing to the total energy consumption.
In this case, the advantage of using ZF scheme  becomes more obvious as $f$ grows large.
However, when $\zeta >1$,  network's energy consumption may decrease with $f$ in certain scenario  as shown in Fig.~\ref{1Energy_f_mass_traditional_100}, where there is an obvious decrease as $f\in[5,6]\times10^9~\mathrm{cycles/s}$ in the case of $I=5$ Mbits. The reason is that when $f$ is small, e.g., less than $4\times10^9~\mathrm{cycles/s}$ in
Fig.~\ref{1Energy_f_mass_traditional_100},
 the edge processing latency constraint  $\mathrm{C3}$  may be not satisfied and central cloud computing becomes the only option.  As $f$ increases, edge cloud computing becomes  feasible for more UEs to save energy, and the total energy cost will decrease. In addition,
it is seen from Fig.~\ref{1Energy_f_mass_traditional_100} that the energy consumption of the three considered schemes are very close since the edge cloud computing is dominant.

\section{Conclusion}\label{sec:conclusion}
In this paper, we studied the joint design of computing services when edge cloud computing and central cloud computing coexist in a two-tier HetNet with MIMO or massive MIMO self-backhaul. By jointly optimizing the cloud selection, the UEs' transmit powers, the SBSs' receive beamforming vectors and the transmit covariance matrices, the network's energy consumption for task offloading and computation can be minimized while meeting both the edge processing and central processing (backhaul) latency constraints. An iterative algorithm was proposed to solve the formulated non-convex mixed-integer optimization problem, which can ensure the convergence and that better performance can be achieved than any existing feasible solutions. The simulation results have further confirmed  that the proposed solution can greatly enhance the system performance, especially comparing with the edge-cloud-only and central-cloud-only computing schemes, indicating the value of cooperation between edge and central clouds.  Moreover, we showed that the massive MIMO backhaul can largely decrease the system complexity while achieving even better performance.


\section*{Appendix A: Proof of {Lemma}~\ref{lemma1}}
\label{App:lemma_1}
\renewcommand{\theequation}{A.\arabic{equation}}
\setcounter{equation}{0}
Based on problem \eqref{Access_power_beam}, we can easily find that each SBS's receive beamformer $\mathbf{w}_n$ aims to maximize the SINR, i.e.,
\begin{align}\label{object_beam}
\mathop {\max }\limits_{\mathbf{w}_n}~\gamma_n^{(\mathrm{a})}(\mathbf{p}^{\mathrm{u}},\mathbf{w}_n)
\end{align}
Problem \eqref{object_beam} can be rewritten as
\begin{align}\label{object_beam_12}
\mathop {\max }\limits_{\mathbf{w}_n}~\frac{\mathbf{w}_n^H \mathbf{\Omega}_n   \mathbf{w}_n}{\mathbf{w}_n^H \mathbf{\Omega}_{-n} \mathbf{w}_n }.
\end{align}
Note that \eqref{object_beam_12} is a generalized eigenvector problem and the optimal $\mathbf{w}_n^*$ is the corresponding  eigenvector associated with the largest eigenvalue of the matrix $\left(\mathbf{\Omega}_{-n}\right)^{-1} \mathbf{\Omega}_n$. Thus, we obtain the result in~\eqref{w_beam}.
\section*{Appendix B: Proof of {Lemma}~\ref{lemma2}}
\label{App:lemma_2}
\renewcommand{\theequation}{B.\arabic{equation}}
\setcounter{equation}{0}

The Lagrange function of problem \eqref{P1_max_sub1} is
\begin{align}\label{B1}
\mathcal{L}\left(\mathbf{p}^{\mathrm{u}},\mathbf{t}, \mathbf{\lambda}, \mathbf{\mu},\mathbf{\nu}\right)&=\sum\limits_{n{\rm{ = }}1}^N I_n t_n +\sum\limits_{n{\rm{ = }}1}^N \lambda_n
\left(p_n^{\mathrm{u}}-t_n R_n^{\mathrm{a}}(\mathbf{p}^{\mathrm{u}},\mathbf{w}_n) \right) \nonumber\\
&+\sum\limits_{n{\rm{ = }}1}^N \mu_n \left(\tau-\gamma_n^{\mathrm{a}}(\mathbf{p}^{\mathrm{u}},\mathbf{w}_n)\right)
+\sum\limits_{n{\rm{ = }}1}^N \nu_n\left(p_n^{\mathrm{u}}-P_{\max}^{\rm u}\right),
\end{align}
where $\left\{\lambda_n,\mu_n,\nu_n\right\}_{n=1}^N$ are non-negative Lagrange multipliers. Based on the definition of KKT conditions, we have
\begin{align}
  &\frac{\partial \mathcal{L}}{\partial p_n^{\mathrm{u}}}=\lambda_n
  -\lambda_n t_n\frac{\partial R_n^{\mathrm{a}}}{\partial p_n^{\mathrm{u}}}
  -\mu_n\frac{\partial \gamma_n^\mathrm{a}}{\partial p_n^{\mathrm{u}}}+\nu_n
 -\sum_{j\neq n}^N\lambda_j t_j\frac{\partial R_j^{\mathrm{a}} }{\partial p_n^{\mathrm{u}}}
 -\sum_{j\neq n}^N\mu_j\frac{\partial \gamma_j^\mathrm{a}}{\partial p_n^{\mathrm{u}}}=0, \label{b2_1}\\
  & \frac{\partial \mathcal{L}}{\partial t_n }=I_n-\lambda_n R_n^{\mathrm{a}}=0, \label{b2_2}\\
 &  \lambda_n\left(p_n^{\mathrm{u}}-t_n R_n^{\mathrm{a}}\right)=0, \label{b2_3}\\
 & \mu_n \left(\tau-\gamma_n^{\mathrm{a}}\right)=0,  \label{b2_4}\\
  & \nu_n\left(p_n^{\mathrm{u}}-P_{\max}^{\rm u}\right)=0.  \label{b2_5}
\end{align}
In \eqref{b2_1}, we have $\frac{\partial R_j^{\mathrm{a}} }{\partial p_n^{\mathrm{u}}}=-
\frac{B^\mathrm{a}}{\ln2}\frac{ \left(\gamma_j^{\mathrm{a}} \right)^2 |\mathbf{w}_j^H{\mathbf{h}_{n,j}^{\rm a}}|^2}{p_j^{\mathrm{u}}
|\mathbf{w}_j^H{\mathbf{h}_{j,j}^{\rm a}}|^2 \left(1+\gamma_j^{\mathrm{a}}\right)}$, and $\frac{\partial \gamma_j^{\mathrm{a}} }
{\partial p_n^{\mathrm{u}}}=-\frac{ \left(\gamma_j^{\mathrm{a}} \right)^2 |\mathbf{w}_j^H {\mathbf{h}_{n,j}^{\rm a}}|^2}{p_j^{\mathrm{u}}
 |\mathbf{w}_j^H{\mathbf{h}_{j,j}^{\rm a}}|^2 }$.
Based on \eqref{b2_1}--\eqref{b2_5}, we observe that the $N$ subproblems shown in \eqref{P1_max_sub1_sub2} has the same KKT conditions with problem \eqref{P1_max_sub1}. In other words, problems \eqref{P1_max_sub1} and  \eqref{P1_max_sub1_sub2} have the same optimal solution. In addition, since $R_n^{\mathrm{a}}>0$, we have $\lambda_n=\frac{I_n}{R_n^{\mathrm{a}}}$ based on \eqref{b2_2}, and $t_n=\frac{p_n^{\mathrm{u}}}{R_n^{\mathrm{a}}}$ based on \eqref{b2_3}. Likewise, by considering the KKT conditions of $N$ subproblems in \eqref{P1_max_sub1_sub2}, we find that they are identical to those shown in \eqref{b2_1}--\eqref{b2_5}.

\section*{Appendix C: Proof of {Theorem}~\ref{theorem_1}}
\label{App:theo_1}
\renewcommand{\theequation}{C.\arabic{equation}}
\setcounter{equation}{0}

Based on \eqref{b2_1}, \eqref{b2_4} and \eqref{b2_5} of Appendix B, KKT conditions for subproblem \eqref{P1_max_sub1_sub2} is given by
\begin{align}
&\lambda_n+M_n-  \frac{B_\mathrm{a}}{\ln2}\frac{ \lambda_n t_n \Lambda_n }{ 1+\gamma_n^{\mathrm{a}} }-\mu_n \Lambda_n +\nu_n=0, \label{C11_1} \\
& \mu_n \left(\tau-\gamma_n^{\mathrm{a}}\right)=0,  \label{C11_2}\\
& \nu_n\left(p_n^{\mathrm{u}}-P_{\max}^{\rm u}\right)=0,  \label{C11_3}
\end{align}
where $\Lambda_n=\frac{|\mathbf{w}_n^H{\mathbf{h}_{n,n}^{\rm a}}|^2}{\sum_{i=1,i\neq n}^{N} p_i^{\mathrm{u}}|
\mathbf{w}_n^H {\mathbf{h}_{i,n}^{\rm a}}|^2+|\mathbf{w}_n^H\mathbf{n}_n|^2}$.
From \eqref{C11_1} and the definition of $\gamma_n^{\mathrm{a}}=p_n^{\mathrm{u}}\Lambda_n$ in \eqref{eq:SINR1}, we see that the optimal $p_n^{\mathrm{u}*}$ meets
\begin{align}\label{optimal_p}
p_n^{\mathrm{u}*}=\frac{B_\mathrm{a}}{\ln2} \frac{\lambda_n t_n }{\lambda_n+M_n-\mu_n^* \Lambda_n +\nu_n^*}-\frac{1}{\Lambda_n},
\end{align}
where $\mu_n^*$ and $\nu_n^*$ satisfy the KKT conditions   \eqref{C11_2} and \eqref{C11_3}, respectively. 
To explicitly obtain $\{p_n^{\mathrm{u}*},\mu_n^*,\nu_n^*\}$, we need to consider the following  cases:
\begin{itemize}
  \item Case 1: When $p_n^{\mathrm{u}*} \in \left(\frac{\tau}{\Lambda_n},P_{\max}^{\rm u}\right)$, $\mu_n^*=\nu_n^*=0$ according to \eqref{C11_2} and \eqref{C11_3}. In this case, $p_n^{\mathrm{u}*}=G_n$ with $G_n=\frac{B_\mathrm{a}}{\ln2} \frac{\lambda_n t_n }{\lambda_n+M_n}-\frac{1}{\Lambda_n}$ according to \eqref{optimal_p}. Therefore, if $G_n \in \left[\frac{\tau}{\Lambda_n},P_{\max}^{\rm u}\right]$, $p_n^{\mathrm{u}*}=G_n$ and $\mu_n^*=\nu_n^*=0$.

  \item Case 2: If $G_n<\frac{\tau}{\Lambda_n}$, it is seen from \eqref{optimal_p} that $\mu^*>0$. In this case, $p_n^{\mathrm{u}*}=\frac{\tau}{\Lambda_n}$ and $\nu_n^*=0$ according to \eqref{C11_2} and \eqref{C11_3}. Substituting $p_n^{\mathrm{u}*}=\frac{\tau}{\Lambda_n}$ and $\nu_n^*=0$ into \eqref{optimal_p}, we obtain $\mu_n^*=\frac{\lambda_n+M_n}{\Lambda_n}-\frac{B_\mathrm{a}}{\ln2} \frac{\lambda_n t_n}{\tau+1}$

  \item Case 3: If $G_n>P_{\max}^{\rm u}$, it is seen from \eqref{optimal_p} that $\nu_n^*>0$. In this case, $p_n^{\mathrm{u}*}=P_{\max}^{\rm u}$ and $\mu_n^*=0$ according to \eqref{C11_3} and \eqref{C11_2}. Substituting $p_n^{\mathrm{u}*}=P_{\max}^{\rm u}$ and $\mu_n^*=0$ into \eqref{optimal_p}, we obtain $\nu_n^*=\frac{B_\mathrm{a}}{\ln2} \frac{\lambda_n t_n }{P_{\max}^{\rm u}+1/\Lambda_n}  -\lambda_n-M_n$.
\end{itemize}
Thus, we get the optimal $\{p_n^{\mathrm{u}*},\mu^*,\nu_n^*\}$ shown in \textbf{Theorem}~\ref{theorem_1}.

\bibliographystyle{IEEEtran}

\end{document}